\newtheorem{theorem}{Theorem}
\newtheorem{lemma}[theorem]{Lemma} 
\newenvironment{proof}{\noindent{\bf Proof.}\ }{\hfill\qed\par\bigskip}
\newcommand\setreflabel[1]{\protected@edef\@currentlabel{#1}}\newcounter{claim}
\newenvironment{claim}{\par\refstepcounter{claim}\setreflabel{$\langle$\theclaim$\rangle$}\label{@clm:\theclaim}\vspace{1ex}\noindent\@currentlabel\it}{\par\vspace{1ex}}
\newenvironment{claimplus}{\par\setreflabel{$\langle$\theclaim$^+\rangle$}\vspace{1ex}\noindent\@currentlabel\it}{\par\vspace{1ex}}
\newenvironment{claimaux}[1]{\par\setreflabel{$\langle$#1$\rangle$}\vspace{1ex}\noindent\@currentlabel\it}{\par\vspace{1ex}}
\newenvironment{proofclaim}[1]{\par\noindent#1}{This proves~\ref{@clm:\theclaim}.\par\vspace{1ex}}
\def\ps@pprintTitle{\let\@oddhead\@empty\let\@evenhead\@empty\def\@oddfoot{\footnotesize\itshape\hfill\today}\let\@evenfoot\@oddfoot}
\begin{document}
\title{Obstructions to chordal circular-arc graphs of small independence number}
\author[mathew]{Mathew Francis\fnref{mathew-thanks}}\ead{mathew@imsc.res.in}
\author[pavol]{Pavol Hell\fnref{pavol-thanks}}\ead{pavol@sfu.ca}
\author[juraj]{Juraj Stacho\fnref{juraj-thanks}\corref{cor}}\ead{j.stacho@warwick.ac.uk}
\address[mathew]{Institute of Mathematical Sciences, IV Cross Road, CIT Campus, Taramani, Chennai 600 113, India}
\address[pavol]{School of Computing Science, Simon Fraser University, 8888 University Drive, Burnaby, Canada V5A 1S6}
\address[juraj]{DIMAP and Mathematics Institute, University of Warwick, Coventry CV4 7AL, United Kingdom}
\cortext[cor]{Corresponding author}
\fntext[mathew-thanks]{MF partially supported by the grant ANR-09-JCJC-0041.}
\fntext[pavol-thanks]{PH partially supported by the author's NSERC Discovery Grant.}
\fntext[juraj-thanks]{JS gratefully acknowledges support from EPSRC, grant EP/I01795X/1.}

\begin{keyword}
circular-arc graph\sep chordal graph\sep forbidden subgraph characterization\sep asteroidal triple
\end{keyword}

\begin{abstract}
A {\em blocking quadruple} (BQ) is a quadruple of vertices of a graph such that
any two vertices of the quadruple either miss (have no neighbours on) some path
connecting the remaining two vertices of the quadruple, or are connected by some
path missed by the remaining two vertices. This is akin to the notion of
asteroidal triple used in the classical characterization of interval graphs by
Lekkerkerker and Boland \cite{lekker62}.

We show that a circular-arc graph cannot have a blocking quadruple.  We also
observe that the absence of blocking quadruples is not in general sufficient to
guarantee that a graph is a circular-arc graph.  Nonetheless, it can be shown to
be sufficient for some special classes of graphs, such as those investigated in
\cite{bon09}.

In this note, we focus on chordal graphs, and study the relationship between the
structure of chordal graphs and the presence/absence of blocking quadruples.

Our contribution is two-fold.  Firstly, we provide a forbidden induced subgraph
characterization of chordal graphs without blocking quadruples. In particular,
we observe that all the forbidden subgraphs are variants of
the subgraphs forbidden for interval graphs \cite{lekker62}. Secondly, we show
that the absence of blocking quadruples is sufficient to guarantee that a
chordal graph with no independent set of size five is a circular-arc graph. In
our proof we use a novel geometric approach, constructing a circular-arc
representation by traversing around a carefully chosen clique tree.
\end{abstract} 

\maketitle

\section{Introduction}

The study of graph obstructions has a long tradition in graph theory.  To
understand the structure of graphs in a particular graph class, it is often
useful (if not easier) instead to characterize all minimal graphs that are not
in the class, usually known as {\em obstructions}. They often result in
elegant characterization theorems and can be used as succinct certificates
in certifying algorithms.

In this paper, we seek obstructions to {\em circular-arc graphs}, the
intersection graphs of families of arcs of a circle. This problem dates back at
least as far as the 1970's \cite{klee69,trotmoor76,tuck69,tuck70,tuck74}, and remains a
challenging question capturing the interest of many researchers over the years
\cite{bang94,bon09,fhh99,klee69,lin06,trotmoor76,tuck74,tuck69,tuck70}.

Predating the study of circular arc graphs, the class of {\em interval graphs},
intersection graphs of families of intervals of the real line, was investigated.
Interval graphs are a subclass of {\em chordal graphs}, graphs in which every
cycle has a chord, as well as of circular-arc graphs. They are known to admit a
number of interesting characterizations \cite{gilmore64,lekker62} and efficient
recognition algorithms \cite{pq76,derek09,intcert}.  In particular, the result
of Lekkerkerker and Boland \cite{lekker62} describes interval graphs in terms of
forbidden induced subgraphs as well as forbidden substructures -- chordless
cycles and so-called {\em asteroidal triples}.

This result is the main motivation of our paper wherein we seek to describe
analogous forbidden substructures for circular-arc graphs.

We remark in passing that, besides interval graphs, there are other subcases of
circular-arc graphs that have already been characterized by the absence of simple
obstructions.  Namely, {\em unit circular-arc graphs} and {\em proper
circular-arc graphs} in \cite{tuck74}, {\em chordal proper circular-arc
graphs} in \cite{bang94}, {\em cobipartite circular-arc graphs} in
\cite{trotmoor76} and later in \cite{fhh99} (using so-called {\em
edge-asteroids}), and {\em Helly circular-arc graphs} within circular-arc graphs
in \cite{lin06} (using so-called {\em obstacles}). 

More recently, in \cite{bon09}, the authors gave forbidden induced subgraph
characterizations for $P_4$-free circular-arc graphs, diamond-free circular-arc
graphs, paw-free circular-arc graphs, and most relevant for this paper, they
characterized claw-free chordal circular-arc graphs.  Our results (namely
Theorem~\ref{thm:main}) may be seen as complementing their work, since in this
regard we give a forbidden induced subgraph characterization of
$\overline{K_5}$-free chordal circular-arc~graphs.

\section{Blocking quadruple}

To build intuition, we start by recalling the definition of asteroidal triple. 
We say that a vertex $x$ {\em misses} a path $P$ in $G$ if $x$
has no neighbour on $P$. 

Vertices $x,y,z$ form an {\em asteroidal triple} of $G$ if between any two of
them, there is a path in $G$ missed by the third vertex.  It is easy to see that
an interval graph cannot have an asteroidal triple \cite{lekker62}.

We say that vertices $x,y$ {\em avoid} vertices $z,w$ in $G$ if there exists an
$xy$-path missed by both $z$ and $w$, or there exists a $zw$-path missed by both
$x$ and $y$. 

We say that vertices $x,y,z,w$ form a {\em blocking quadruple} (BQ) of $G$ if
any two of them avoid the remaining two. Namely, if $x,y$ avoid $z,w$, if $x,z$
avoid $y,w$, and if $x,w$ avoid $y,z$.

\begin{lemma}\label{lem:bqs}
If $G$ is a circular-arc graph, then $G$ has no blocking quadruple.
\end{lemma}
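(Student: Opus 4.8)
The plan is to argue inside a fixed circular-arc representation and reach a contradiction from the combinatorics of four arcs on a circle. Fix arcs $A_v\subseteq S^1$ for each vertex $v$ so that $uv\in E(G)$ if and only if $A_u\cap A_v\neq\emptyset$, and suppose for contradiction that $x,y,z,w$ form a blocking quadruple. The first step is to translate each \emph{avoid} relation into geometry. Consider one witness, say an $xy$-path $P=v_0v_1\cdots v_m$ (with $v_0=x$, $v_m=y$) missed by $z$ and $w$. Since consecutive vertices of $P$ are adjacent, the union $U=\bigcup_i A_{v_i}$ is a connected subset of $S^1$; since $z$ and $w$ miss $P$, each of $A_z,A_w$ is disjoint from every $A_{v_i}$ and hence from $U$. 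A connected subset of a circle is either an arc or all of $S^1$, and $U\neq S^1$ because $A_z$ is disjoint from it, so $U$ is a proper arc. Thus $A_x\cup A_y\subseteq U$ lies in a connected arc disjoint from $A_z\cup A_w$; symmetrically, a witnessing $zw$-path yields a connected arc containing $A_z\cup A_w$ and disjoint from $A_x\cup A_y$. Either way, ``$x,y$ avoid $z,w$'' gives that one of the two pairs lies in a connected arc missing the other, and in particular $(A_x\cup A_y)\cap(A_z\cup A_w)=\emptyset$.

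Next I would extract pairwise disjointness. Applying the observation above to the three avoid relations of the blocking quadruple, each relation forces its four ``cross'' intersections to be empty: from $x,y$ avoiding $z,w$ the pairs $xz,xw,yz,yw$ are disjoint, from $x,z$ avoiding $y,w$ the pairs $xy,xw,zy,zw$ are disjoint, and from $x,w$ avoiding $y,z$ the pairs $xy,xz,wy,wz$ are disjoint. Taking the union of these conclusions shows that all six pairs among $A_x,A_y,A_z,A_w$ are disjoint, so the four arcs are pairwise disjoint and therefore occur in a well-defined cyclic order $A_a,A_b,A_c,A_d$ around $S^1$ for some relabelling $\{a,b,c,d\}=\{x,y,z,w\}$.

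The final step is the combinatorial contradiction. A blocking quadruple requires all three of the $2{+}2$ pairings of $\{x,y,z,w\}$ to be \emph{separable}, meaning one part is contained in a connected arc disjoint from the other. But with the cyclic order $a,b,c,d$ fixed, exactly one pairing is the crossing (diagonal) pairing $\{a,c\},\{b,d\}$, and this one cannot be separable: any connected arc containing both $A_a$ and $A_c$ must, travelling one way or the other, also meet $A_b$ or $A_d$, and symmetrically any connected arc containing $A_b$ and $A_d$ must meet $A_a$ or $A_c$. Hence the crossing pairing is non-separable, contradicting the fact that the blocking quadruple makes all three pairings separable. I expect the main care to lie not in this counting but in the topological bookkeeping of the first step — justifying that a connected union of circular arcs is itself an arc (or the whole circle), and cleanly passing from ``avoid'' to a genuine separating arc — after which the cyclic-order argument closes the proof.
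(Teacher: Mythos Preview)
Your proposal is correct and takes essentially the same approach as the paper: fix a representation, use that the four arcs are pairwise disjoint to place them in a cyclic order, and then observe that for the crossing pairing no connected union of arcs can contain one pair while missing the other. The paper is terser---it notes directly (combinatorially, without the representation) that a blocking quadruple is an independent set, whereas you derive arc-disjointness from the three avoid relations---but the core geometric contradiction is identical.
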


To see this, observe that a BQ is always an independent set of size four.
Now, suppose that $G$ has a circular-arc representation and the arcs
representing vertices $x,y,z,w$ appear in this circular order. Then no path
between $x$ and $z$ can be missed by both $y$ and $w$, and no path between $y$
and $w$ can be missed~by~$x$~and~$z$. In other words, the vertices $x, z$ do not
avoid $y, w$.

\begin{figure}[!b]
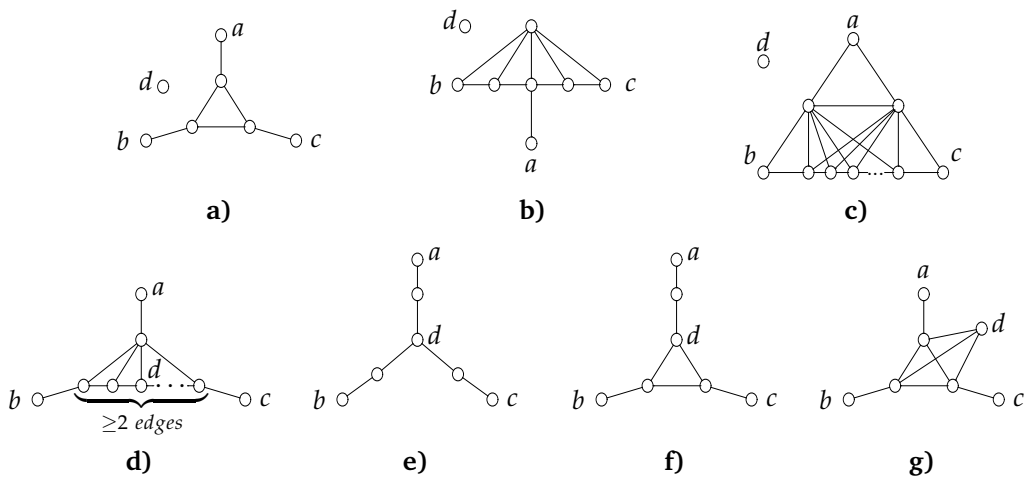
\centering
\qquad\begin{tabular}{@{}c@{\qquad\qquad}c@{\qquad\qquad}c@{}}
$\xy/r1.8pc/: (0,0.8)*[o][F]{\phantom{s}}="x1"; (1,0.8)*[o][F]{\phantom{s}}="x2";
(0.5,1.6)*[o][F]{\phantom{s}}="x3"; (0.5,2.4)*[o][F]{\phantom{s}}="x4";
(-0.8,0.56)*[o][F]{\phantom{s}}="x5"; (1.8,0.56)*[o][F]{\phantom{s}}="x6";
(-0.5,1.5)*[o][F]{\phantom{s}}="x7"; (0,0)*{}; {\ar@{-} "x1";"x2"};
{\ar@{-} "x1";"x3"};{\ar@{-} "x2";"x3"};{\ar@{-} "x3";"x4"};{\ar@{-} "x1";"x5"};
{\ar@{-} "x2";"x6"}; "x4"+(0.3,0.1)*{a}; "x5"+(-0.4,0)*{b}; "x6"+(0.35,0)*{c};
"x7"+(-0.3,0.1)*{d}; \endxy$
& 
$\xy/r2.3pc/: (-0.5,1.2)*[o][F]{\phantom{s}}="x1";
(1.5,1.2)*[o][F]{\phantom{s}}="x2"; (0.5,2.0)*[o][F]{\phantom{s}}="x3";
(0.5,0.4)*[o][F]{\phantom{s}}="x4"; (0,1.2)*[o][F]{\phantom{s}}="x71";
(0.5,1.2)*[o][F]{\phantom{s}}="x72"; (1,1.2)*[o][F]{\phantom{s}}="x73";
(-0.4,2.0)*[o][F]{\phantom{s}}="x5"; (0,0)*{}; {\ar@{-} "x1";"x71"};
{\ar@{-} "x71";"x72"}; {\ar@{-} "x72";"x73"}; {\ar@{-} "x73";"x2"};
{\ar@{-} "x1";"x3"}; {\ar@{-} "x2";"x3"}; {\ar@{-} "x72";"x4"};
{\ar@{-} "x3";"x71"}; {\ar@{-} "x3";"x72"}; {\ar@{-} "x3";"x73"};
"x4"+(0.0,-0.3)*{a}; "x1"+(-0.3,0)*{b}; "x2"+(0.35,0)*{c};
"x5"+(-0.2,0.1)*{d}; \endxy$
& 
$\xy/r1.4pc/: (0,3)*[o][F]{\phantom{s}}="1"; (-2,0)*[o][F]{\phantom{s}}="2";
(2,0)*[o][F]{\phantom{s}}="3"; (-1,1.5)*[o][F]{\phantom{s}}="4";
(-1,0)*[o][F]{\phantom{s}}="51"; (-0.5,0)*[o][F]{\phantom{s}}="52";
(0,0)*[o][F]{\phantom{s}}="53";(0.5,0)*{_{\ldots}};(1,0)*[o][F]{\phantom{s}}="54";
(1,1.5)*[o][F]{\phantom{s}}="6";(-2,2.5)*[o][F]{\phantom{s}}="7";
{\ar@{-} "1";"4"}; {\ar@{-} "1";"6"}; {\ar@{-} "2";"4"}; {\ar@{-} "2";"51"};
{\ar@{-} "3";"54"}; {\ar@{-} "3";"6"}; {\ar@{-} "4";"51"}; {\ar@{-} "4";"52"};
{\ar@{-} "4";"53"}; {\ar@{-} "4";"54"}; {\ar@{-} "4";"6"}; {\ar@{-} "51";"6"};
{\ar@{-} "52";"6"}; {\ar@{-} "53";"6"}; {\ar@{-} "54";"6"}; {\ar@{-} "51";"52"};
{\ar@{-} "52";"53"}; {\ar@{-} "53";"53"+(0.3,0)}; {\ar@{-} "54";"54"-(0.3,0)};
"1"+(0,0.4)*{a}; "2"+(-0.3,0.4)*{b}; "3"+(0.3,0.4)*{c}; "7"+(0,0.45)*{d};\endxy$
\smallskip\\
\bf a) & \bf b) & \bf c)
\end{tabular}\medskip

\begin{tabular}{@{}c@{\quad~~}c@{\qquad}c@{\quad~~}c@{}}
$\xy/r1.8pc/: (-0.5,0.8)*[o][F]{\phantom{s}}="x1";
(1.5,0.8)*[o][F]{\phantom{s}}="x2"; (0.5,1.6)*[o][F]{\phantom{s}}="x3";
(0.5,2.4)*[o][F]{\phantom{s}}="x4"; (-1.3,0.56)*[o][F]{\phantom{s}}="x5";
(2.3,0.56)*[o][F]{\phantom{s}}="x6"; (0,0.8)*[o][F]{\phantom{s}}="x71";
(0.5,0.8)*[o][F]{\phantom{s}}="x72"; (1,0.8)*{\ldots}; (0,0)*{};
{\ar@{-} "x1";"x71"}; {\ar@{-} "x71";"x72"}; {\ar@{-} "x72";"x72"+(0.2,0)};
{\ar@{-} "x2";"x2"-(0.2,0)}; {\ar@{-} "x1";"x3"}; {\ar@{-} "x2";"x3"};
{\ar@{-} "x3";"x4"}; {\ar@{-} "x1";"x5"}; {\ar@{-} "x2";"x6"};
{\ar@{-} "x3";"x71"}; {\ar@{-} "x3";"x72"}; "x4"+(0.3,0.1)*{a};
"x5"+(-0.4,0)*{b}; "x6"+(0.35,0)*{c}; "x72"+(0.23,0.3)*{d};
"x72"+(0,-0.5)*{\underbrace{\hspace{5em}}_{\geq 2~edges}};\endxy$
& 
$\xy/r1.8pc/: (-0.2,1.0)*[o][F]{\phantom{s}}="x1";
(1.2,1.0)*[o][F]{\phantom{s}}="x2"; (0.5,1.6)*[o][F]{\phantom{s}}="x3";
(0.5,2.4)*[o][F]{\phantom{s}}="x4"; (-0.8,0.56)*[o][F]{\phantom{s}}="x5";
(1.8,0.56)*[o][F]{\phantom{s}}="x6"; (0.5,3.0)*[o][F]{\phantom{s}}="x7";
(0,0)*{}; {\ar@{-} "x1";"x3"}; {\ar@{-} "x2";"x3"}; {\ar@{-} "x3";"x4"};
{\ar@{-} "x1";"x5"}; {\ar@{-} "x2";"x6"};{\ar@{-} "x4";"x7"};"x3"+(0.3,0.1)*{d};
"x5"+(-0.4,0)*{b}; "x6"+(0.35,0)*{c}; "x7"+(0.3,0.1)*{a}; \endxy$
& 
$\xy/r1.8pc/: (0,0.8)*[o][F]{\phantom{s}}="x1";(1,0.8)*[o][F]{\phantom{s}}="x2";
(0.5,1.6)*[o][F]{\phantom{s}}="x3"; (0.5,2.4)*[o][F]{\phantom{s}}="x4";
(-0.8,0.56)*[o][F]{\phantom{s}}="x5"; (1.8,0.56)*[o][F]{\phantom{s}}="x6";
(0.5,3.0)*[o][F]{\phantom{s}}="x7"; (0,0)*{}; {\ar@{-} "x1";"x2"};
{\ar@{-} "x1";"x3"};{\ar@{-} "x2";"x3"};{\ar@{-} "x3";"x4"};{\ar@{-} "x1";"x5"};
{\ar@{-} "x2";"x6"}; {\ar@{-} "x4";"x7"}; "x3"+(0.3,0.1)*{d}; "x5"+(-0.4,0)*{b};
"x6"+(0.35,0)*{c}; "x7"+(0.3,0.1)*{a}; \endxy$
& 
$\xy/r1.8pc/: (0,0.8)*[o][F]{\phantom{s}}="x1";(1,0.8)*[o][F]{\phantom{s}}="x2";
(0.5,1.6)*[o][F]{\phantom{s}}="x3"; (0.5,2.4)*[o][F]{\phantom{s}}="x4";
(-0.8,0.56)*[o][F]{\phantom{s}}="x5"; (1.8,0.56)*[o][F]{\phantom{s}}="x6";
(1.5,1.8)*[o][F]{\phantom{s}}="x7"; (0,0)*{}; {\ar@{-} "x1";"x2"};
{\ar@{-} "x1";"x3"};{\ar@{-} "x2";"x3"};{\ar@{-} "x3";"x4"};{\ar@{-} "x1";"x5"};
{\ar@{-} "x2";"x6"};{\ar@{-} "x1";"x7"};{\ar@{-} "x2";"x7"};{\ar@{-} "x3";"x7"};
"x4"+(0,0.4)*{a}; "x5"+(-0.4,0)*{b}; "x6"+(0.35,0)*{c};
"x7"+(0.3,0.1)*{d};\endxy$
\\
\bf d) & \bf e) & \bf f) & \bf g)
\end{tabular}
\caption{Forbidden induced subgraph characterization of chordal graphs with no BQs.\label{fig:forb}}
\end{figure}

Let us now discuss various forms of blocking quadruples that one may encounter
in graphs.  One class of such examples arises from asteroidal triples:  if
$a,b,c$ form an asteroidal triple of $G$ and $d$ is a vertex of degree zero
in $G$, then $a,b,c,d$ is a blocking quadruple. This can be seen in the first
three graphs in Figure~\ref{fig:forb}. Other ways of extending an asteroidal
triple to a BQ are also illustrated in Figure~\ref{fig:forb}.  The vertices
$a,b,c$ in each of the graphs in the second row form an asteroidal triple while
the vertices $a,b,c,d$ form a blocking quadruple.  For chordal graphs, these are
all possible forms of BQs (see Theorem~\ref{thm:bq-chordal}).

\begin{figure}[h!]
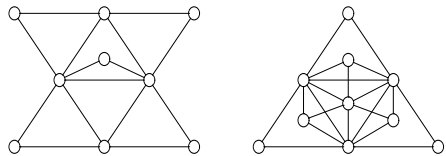
\centering
$\xy/r1.4pc/: (0,3)*[o][F]{\phantom{s}}="1"; (-2,0)*[o][F]{\phantom{s}}="2";
(-2,3)*[o][F]{\phantom{s}}="2'"; (2,0)*[o][F]{\phantom{s}}="3";
(2,3)*[o][F]{\phantom{s}}="3'"; (-1,1.5)*[o][F]{\phantom{s}}="4";
(0,0)*[o][F]{\phantom{s}}="5"; (1,1.5)*[o][F]{\phantom{s}}="6";
(0,1.975)*[o][F]{\phantom{s}}="7"; {\ar@{-} "1";"4"}; {\ar@{-} "1";"6"};
{\ar@{-} "2";"4"}; {\ar@{-} "2'";"4"}; {\ar@{-} "2'";"1"}; {\ar@{-} "3'";"6"};
{\ar@{-} "3'";"1"}; {\ar@{-} "2";"5"}; {\ar@{-} "3";"5"}; {\ar@{-} "3";"6"};
{\ar@{-} "4";"5"}; {\ar@{-} "4";"6"}; {\ar@{-} "5";"6"}; {\ar@{-} "7";"4"};
{\ar@{-} "7";"6"};\endxy$\qquad
$\xy/r1.4pc/: (0,3)*[o][F]{\phantom{s}}="1"; (-2,0)*[o][F]{\phantom{s}}="2";
(2,0)*[o][F]{\phantom{s}}="3"; (-1,1.5)*[o][F]{\phantom{s}}="4";
(0,0)*[o][F]{\phantom{s}}="5"; (1,1.5)*[o][F]{\phantom{s}}="6";
(0,0.975)*[o][F]{\phantom{s}}="7"; (0,1.95)*[o][F]{\phantom{s}}="z";
(-1,0.6)*[o][F]{\phantom{s}}="x"; (1,0.6)*[o][F]{\phantom{s}}="y";
{\ar@{-} "1";"4"}; {\ar@{-} "1";"6"}; {\ar@{-} "2";"4"}; {\ar@{-} "2";"5"};
{\ar@{-} "3";"5"}; {\ar@{-} "3";"6"}; {\ar@{-} "4";"5"}; {\ar@{-} "4";"6"};
{\ar@{-} "5";"6"}; {\ar@{-} "7";"4"}; {\ar@{-} "7";"5"}; {\ar@{-} "7";"6"};
{\ar@{-} "7";"x"}; {\ar@{-} "7";"y"}; {\ar@{-} "7";"z"}; {\ar@{-} "z";"4"};
{\ar@{-} "z";"6"}; {\ar@{-} "x";"4"}; {\ar@{-} "x";"5"}; {\ar@{-} "y";"5"};
{\ar@{-} "y";"6"};\endxy$
\caption{Some minimal chordal non-circular-arc graphs with no BQs.\label{fig:no-bqs}}
\end{figure}

Unlike these examples, the two chordal graphs in Figure~\ref{fig:no-bqs} do not
contain blocking quadruples, and yet they are not circular-arc graphs.  Thus the
absence of blocking quadruples is not sufficient to guarantee that a (chordal)
graph is a circular-arc graph.  However, in some cases, it may be sufficient.

For instance, a result of \cite{bon09} (Corollary 15) can be restated as
follows.

\begin{lemma}
A claw-free chordal graph is a circular-arc graph iff it has no BQ.
\end{lemma}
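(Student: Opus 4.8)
The forward implication is free: a claw-free chordal circular-arc graph is in particular a circular-arc graph, so by Lemma~\ref{lem:bqs} it has no BQ. All the content lies in the converse, that a claw-free chordal graph $G$ with no BQ is a circular-arc graph, and I would attack it through its contrapositive, namely that a claw-free chordal graph which is \emph{not} circular-arc must contain a BQ.

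The one structural fact I would isolate first is that having a BQ is monotone under passing to a larger graph. Concretely: if $H$ is an induced subgraph of $G$ and $x,y,z,w$ form a BQ of $H$, then each path witnessing one of the three avoidance conditions lies entirely in $H$; since $H$ is induced, a quadruple vertex that has no neighbour on such a path inside $H$ still has none in $G$. Hence the same path witnesses avoidance in $G$, and $x,y,z,w$ remain a BQ of $G$. In other words, the class of BQ-free graphs is hereditary, so to produce a BQ in $G$ it suffices to exhibit one inside any induced subgraph of $G$.

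Now I would invoke the forbidden induced subgraph characterization of claw-free chordal circular-arc graphs established in \cite{bon09}: a claw-free chordal graph fails to be circular-arc precisely when it contains one of the minimal obstructions listed there, each of which is itself claw-free and chordal. The proof then reduces to a single bounded verification, namely exhibiting a blocking quadruple inside each of those obstructions. By the monotonicity above, the presence of any such obstruction as an induced subgraph of $G$ forces a BQ in $G$, which is exactly the contrapositive we want; combined with the necessity direction this gives the stated equivalence.

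The only genuine work is this per-obstruction check, and here I expect the pattern already visible in Figure~\ref{fig:forb} to do the job: the relevant obstructions are built from an asteroidal triple $\{a,b,c\}$ together with a fourth vertex $d$, and the natural candidate is the quadruple $\{a,b,c,d\}$. For the three avoidance conditions one uses the asteroidal paths between $a$, $b$, $c$ together with short paths joining $d$ to the triple, and the delicate point is to confirm that $d$ is positioned so that all three pairings, of $\{a,b\}$ against $\{c,d\}$, of $\{a,c\}$ against $\{b,d\}$, and of $\{a,d\}$ against $\{b,c\}$, each satisfy the avoidance requirement. Chordality controls which vertices can meet the connecting paths and claw-freeness restricts the local structure around the triple, so each individual check is routine; the anticipated obstacle is simply to organize the case analysis so that it is uniform across the obstruction list rather than handled ad hoc graph by graph.
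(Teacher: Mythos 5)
The paper itself offers no proof of this lemma: it is presented purely as a restatement of Corollary~15 of \cite{bon09}, the observation being that the forbidden induced subgraphs listed there for claw-free chordal circular-arc graphs are precisely the minimal claw-free chordal graphs containing a blocking quadruple. Your outline is the natural way to turn that citation into an actual argument, and its skeleton is sound: the forward direction is indeed immediate from Lemma~\ref{lem:bqs}, and your monotonicity observation is correct --- a BQ of an induced subgraph $H$ of $G$ remains a BQ of $G$, since the witnessing paths live in $H$ and ``has no neighbour on $P$'' transfers from $H$ to $G$ exactly because $H$ is induced.

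The genuine gap is that the step you yourself call ``the only genuine work'' is never carried out. You neither state what the obstructions of \cite{bon09} are nor exhibit a blocking quadruple in any one of them; you only conjecture that they follow the pattern of Figure~\ref{fig:forb} (an asteroidal triple plus a suitably placed fourth vertex) and that the checks will be routine. This cannot be waved through: the paper shows in Figure~\ref{fig:no-bqs} that chordal non-circular-arc graphs \emph{without} BQs exist, so the lemma is true only because claw-freeness happens to exclude every such example, and establishing that is exactly the content the per-obstruction verification must supply. A self-contained alternative within this paper would be to combine Theorem~\ref{thm:bq-chordal} (a chordal graph has a BQ iff it contains a graph of Figure~\ref{fig:forb}) with the list from \cite{bon09} and check that the two lists agree once the graphs containing claws are discarded --- but that, too, is a finite verification your proposal leaves undone, so as written the argument is a correct plan rather than a proof.
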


We prove a similar statement for chordal graphs of independence number at most
four (see Theorem~\ref{thm:main}).  The absence of BQs therefore gives us a
simple and uniform forbidden structure characterization of these classes, as
opposed to more common forbidden induced subgraph characterizations
\cite{bang94,bon09,lin06,tuck74}.

\section{Main results}\label{sec:main}

\noindent In this section, we summarize the main theorems of this paper.

In the first theorem, we describe all minimal forbidden induced subgraphs characterizing
chordal graphs with no BQs.  These are the graphs depicted in Figure
\ref{fig:forb}.

\begin{theorem}\label{thm:bq-chordal}
If $G$ is chordal, then the following are equivalent.
\begin{compactenum}[(i)]
\item $G$ contains a blocking quadruple.
\item $G$ contains an induced subgraph isomorphic to a graph in Figure~\ref{fig:forb}.
\end{compactenum}
\end{theorem}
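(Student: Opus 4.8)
The plan is to prove the two implications separately; the reverse implication (ii)~$\Rightarrow$~(i) is routine, while the forward implication (i)~$\Rightarrow$~(ii) carries the real weight. For (ii)~$\Rightarrow$~(i), I would first verify by hand that in each of the seven graphs of Figure~\ref{fig:forb} the four labelled vertices $a,b,c,d$ do form a blocking quadruple, treating the two parametrised graphs (those with a ``$\ldots$'' path) uniformly over all admissible path lengths. I would then note that containing a BQ is inherited by induced supergraphs: if $H$ is an induced subgraph of $G$ and $a,b,c,d$ avoid one another in $H$, then each avoiding path already lies in $H$, and since $H$ is induced the adjacencies between the two ``missing'' vertices and that path are identical in $G$ and in $H$. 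Hence the very same paths witness the avoidances in $G$, so any $G$ containing one of these graphs induced has a BQ.

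For the forward direction I would build on a single clean observation that needs no chordality: \emph{every blocking quadruple contains an asteroidal triple}. Each of the three avoidance relations of a BQ $\{a,b,c,d\}$ selects one of its two defining paths, that is, one edge from each of the three perfect matchings of $K_4$ on $\{a,b,c,d\}$. Any two edges taken from different matchings share a vertex (in $K_4$ the only disjoint pairs are the matching pairs), so the three selected edges pairwise intersect and therefore form either a triangle or a star. If they form a triangle, say on $\{a,b,c\}$, the three selected paths are exactly paths witnessing that $\{a,b,c\}$ is an asteroidal triple. If they form a star with centre $d$, then concatenating the selected $da$, $db$, $dc$ paths in pairs and extracting subpaths yields, for each pair of $\{a,b,c\}$, a connecting path missed by the third vertex; so $\{a,b,c\}$ is again an asteroidal triple. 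Crucially, this dichotomy also pins down the fourth vertex: in the triangle case $d$ misses all three triple-witnessing paths, whereas in the star case $d$ is the centre and comes equipped with explicit paths to $a,b,c$.

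With an asteroidal triple $\{a,b,c\}$ in hand, I would invoke the chordal case of Lekkerkerker and Boland \cite{lekker62}: shortening the three connecting paths and cleaning them up by chordality (shortcutting chords, and using that minimal separators are cliques) produces an induced minimal asteroidal witness~$W$, contained in the union of the three paths, and isomorphic to one of the graphs obtained from (a), (b), (c) of Figure~\ref{fig:forb} by deleting the isolated vertex $d$. In the triangle case this already finishes the argument: since $d$ misses all three witnessing paths, $d$ is non-adjacent to every vertex of $W$ (and lies on none of the paths, as its only possible role would be as an internal vertex, which is impossible for a missed path); hence $G[W\cup\{d\}]$ is precisely one of (a), (b), (c) of Figure~\ref{fig:forb}, with $d$ as its isolated vertex.

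The star case is where the work concentrates, and I expect it to be the main obstacle. Here $d$ is the centre and is woven into the witness through its three paths to $a,b,c$, so $d$ is no longer isolated. One must use chordality to take the three $d$-centred paths short, to force the neighbourhood of $d$ into the triangle/clique pattern visible in (e), (f), (g) of Figure~\ref{fig:forb}, and to rule out stray chords, finally matching the resulting induced subgraph against the four graphs of the second row, with (d) of Figure~\ref{fig:forb} accounting for the ``long'' configuration. The delicate point throughout is to translate the global avoidance conditions governing $d$ into a bounded local picture around the witness while certifying that the induced subgraph contains exactly the prescribed edges and no others.
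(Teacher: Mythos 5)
Your overall architecture is sound and in fact mirrors the paper's: the (ii)$\Rightarrow$(i) direction via heredity of BQs under induced supergraphs is correct, and your triangle/star dichotomy for the three selected ``avoiding'' pairs is exactly the fork the paper takes. In the triangle case $a,b,c$ is an asteroidal triple of $G-N[d]$, so that graph is a non-interval chordal graph and Lekkerkerker--Boland applies (this is the paper's Lemma~\ref{lem:forb1}); two small repairs are needed here, namely that the minimal witness in $G-N[d]$ may be graph (d) or (e) of Figure~\ref{fig:interval} rather than (a)--(c), in which case the witness is already one of the graphs of Figure~\ref{fig:forb} and no isolated vertex is appended, and that the witness need not lie in the union of your three paths --- which is harmless, since \emph{every} vertex of $G-N[d]$ is non-adjacent to $d$.

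The genuine gap is the star case, which you defer with ``one must use chordality to take the three paths short, force the neighbourhood of the centre into the clique pattern, and rule out stray chords.'' That sentence is the theorem, not a proof of it, and it is where essentially all of the paper's work lives (the whole of Lemma~\ref{lem:forb2}). Two specific reasons your sketch does not close it. First, the mechanism is not ``shortening by chordality'': the paper fixes the three paths from the centre to minimize the \emph{sum} of their lengths subject to the missing conditions, and then runs roughly ten exchange arguments, each producing a new blocking quadruple with a strictly smaller sum (e.g.\ the paths pairwise meet only in the centre; every cross edge is incident to a neighbour of the centre; cross-neighbourhoods are downward closed towards the centre; two paths of length $\geq 3$ admit no cross edges; the neighbour of each endpoint must have a cross neighbour). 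Near-intervality enters only at two isolated points, so chordal shortcutting alone will not force the structure. Second, your plan to weave $d$ into a minimal asteroidal witness $W$ for $\{a,b,c\}$ cannot work as stated: in the star case the witnessing $ab$-, $ac$-, $bc$-paths all pass through the centre $d$, so there is no vertex $v$ with an asteroidal triple in $G-N[v]$ to hand to Lekkerkerker--Boland (graph (g) of Figure~\ref{fig:forb} is nearly interval yet has a BQ), and a minimal witness produced abstractly need not contain $a,b,c$ or bear any controllable relation to $d$. Until the star case is argued in full, the forward implication is unproven.
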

\noindent In fact, the theorem holds for the more general class of {\em nearly
chordal} graphs (a graph class defined in \cite{brandstadt} generalizing both chordal
and circular-arc graphs).

In the second theorem, we show that the absence of BQs is necessary and sufficient for a chordal
graph of independence number $\alpha(G)\leq 4$ to be a circular-arc graph.

\begin{theorem}\label{thm:main}
If $G$ is chordal and $\alpha(G)\leq 4$, the following are equivalent.
\begin{compactenum}[(i)]
\item $G$ is a circular-arc graph.
\item $G$ contains no blocking quadruple.
\end{compactenum}
\end{theorem}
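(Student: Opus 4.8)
The forward implication (i)$\Rightarrow$(ii) is immediate from Lemma~\ref{lem:bqs}, which holds for every circular-arc graph regardless of chordality or independence number. The entire content of the theorem is therefore the converse (ii)$\Rightarrow$(i): assuming that $G$ is chordal, that $\alpha(G)\le 4$, and that $G$ has no blocking quadruple, I must exhibit a circular-arc representation of $G$.

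My plan is to work with a clique tree $T$ of $G$, that is, a tree whose nodes are the maximal cliques of $G$ and in which, for every vertex $v$, the cliques containing $v$ induce a subtree $T_v$. The independence bound immediately controls the shape of $T$: each leaf clique $L$ of $T$ strictly contains its unique separator towards its neighbour, so $L$ carries a simplicial vertex lying in no other maximal clique, and the private vertices attached to distinct leaves are pairwise non-adjacent. Hence every clique tree of $G$ has at most $\alpha(G)\le 4$ leaves. If $T$ can be chosen to be a path, then $G$ is an interval graph and we are done (this is the case with no asteroidal triple). The interesting situation is therefore when $T$ genuinely branches, with the branching confined to at most four pendant \emph{arms}.

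Next I would invoke the absence of blocking quadruples to pin down how these arms may meet. The guiding intuition is that four arms that can be pairwise separated inside $T$ produce, through their private simplicial vertices, four pairwise-avoiding vertices, i.e.\ a blocking quadruple. Concretely, I expect to show that no-BQ forbids four arms whose attaching minimal separators are ``spread out'' enough to route the four separating paths demanded by the definition of avoiding; for instance, four arms hanging from one clique through pairwise disjoint separators are already a BQ. The target of this analysis is a normal form in which the essential branching collapses onto a single clique, or onto a short segment whose two branch cliques share their separators strongly enough that no four arms can be pulled apart simultaneously. This is the step that does the real work, and it is where I expect the main obstacle to lie, precisely because whether a given branching pattern is a blocking quadruple is a delicate question depending on the concrete minimal separators rather than on the unlabelled shape of $T$ alone.

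Finally, from such a normal-form clique tree I would build the representation geometrically, as the abstract foreshadows, by \emph{traversing around} $T$ to read off a cyclic ordering $C_1,\dots,C_m$ of the maximal cliques of $G$ in which, for every vertex $v$, the cliques of $T_v$ occur consecutively; equivalently, the clique--vertex incidence is made to have the circular consecutive-ones property. Given such an ordering, assigning to each vertex the circular arc spanning its (now contiguous) cliques yields a circular-arc representation automatically: two vertices are adjacent exactly when they share a maximal clique, hence exactly when their spanning arcs meet, while two non-adjacent vertices have disjoint clique sets $T_u\cap T_v=\emptyset$, hence disjoint contiguous blocks and disjoint arcs. Thus the whole difficulty is concentrated in producing the cyclic ordering, and here both hypotheses are used decisively: the normal form obtained from no-BQ guarantees that the traversal can lay the arms consecutively around the cycle, while $\alpha(G)\le 4$ bounds the number of arms by four and rules out the more intricate branchings occurring in the chordal non-circular-arc graphs of Figure~\ref{fig:no-bqs}, which have no blocking quadruple yet are not circular-arc only because they contain an independent set of size five. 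Assembling these pieces gives the desired representation and completes (ii)$\Rightarrow$(i).
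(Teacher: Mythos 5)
Your opening moves are sound and agree with the paper: the forward direction is exactly Lemma~\ref{lem:bqs}, leaf cliques carry private simplicial vertices, so a clique tree has at most $\alpha(G)\leq 4$ leaves, and the two-leaf case reduces to interval graphs. But the proof has two genuine gaps, and they are precisely where the content of the theorem lives. First, the ``normal form'' step is not an argument at all --- you yourself flag it as ``where I expect the main obstacle to lie.'' The paper does not collapse the branching to a single clique; it keeps whichever tree shape occurs (the two four-leaf shapes of Figure~\ref{fig:tree4} and the two three-leaf shapes of Figure~\ref{fig:tree3}), introduces the auxiliary graph $H$ on the four private leaf vertices recording which pairs avoid which, uses the no-BQ hypothesis only to force $H$ to be a $4$-cycle, a $2K_2$ or edgeless, and then chooses an Euler tour whose restriction to the four cover cliques is compatible with $H$ (in one subcase re-attaching a tree edge first). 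Note also that deciding whether $v_iv_j\in E(H)$ involves arbitrary avoiding paths in $G$, not just the separators of $T$, so the ``spread-out separators give a BQ'' intuition needs the careful path arguments the paper supplies.

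Second, and more fundamentally, your endgame asks for too much. You want a cyclic ordering of the maximal cliques in which every vertex's cliques are consecutive, i.e.\ the circular consecutive-ones property for the clique matrix; by Gavril's theorem this would make $G$ a \emph{Helly} circular-arc graph, which is strictly stronger than being a circular-arc graph, and you give no reason such an ordering exists. It is in general false for chordal circular-arc graphs: in the example of Figure~\ref{fig:ex2}, the clique sets of $h$, $i$ and $o$ already admit no common circular consecutive arrangement, and indeed in the representation produced there the arc of $h$ covers only \emph{some} of the positions of cliques containing $h$. This is exactly why the paper's Lemma~\ref{lem:main} is formulated the way it is: the arc of $u$ is only the maximal consecutive run of cliques containing $u$ around the chosen anchor $A_{\phi(u)}$, so adjacency is not automatic, and the whole point of conditions \ref{enum:f1}--\ref{enum:f4} is to guarantee that every edge is still realized even though the Euler tour scatters a vertex's cliques into several segments. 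Your claim that ``assigning to each vertex the circular arc spanning its (now contiguous) cliques yields a circular-arc representation automatically'' is correct only conditionally on a consecutiveness property that you have not established and that does not follow from traversing around a clique tree. Until either that property is proved under the hypotheses (no BQ, $\alpha\leq 4$) or the construction is weakened along the lines of Lemma~\ref{lem:main}, the converse direction is not proved.
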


\noindent The theorem fails for chordal graphs $G$ with $\alpha(G)\geq 5$ as
Figure~\ref{fig:no-bqs} shows.

Notice that this theorem implies that every chordal graph $G$ with
$\alpha(G)\leq 3$ is a circular-arc graph (as any blocking quadruple is
necessarily an independent set of size four).  In contrast, it is known that
every chordal graph $G$ with $\alpha(G)\leq 2$ is an interval graph (an
asteroidal triple is also an independent set), and in fact, a proper interval
graph (claw contains an independent set of size three).

\section{Proof of Theorem~\ref{thm:bq-chordal}}

\noindent We prove a slightly more general statement (see Theorem~\ref{thm:bq-nearly} below).

A graph $G$ is a {\em nearly chordal} graph \cite{brandstadt} if for each $v\in
V(G)$, the graph $G-N[v]$ is a chordal graph. A~graph $G$ is a {\em nearly
interval} graph if for each $v\in V(G)$, the graph $G-N[v]$ is an interval
graph.  Clearly, every chordal graph is nearly chordal.

\begin{theorem}\label{thm:bq-nearly}
If $G$ is nearly chordal, then the following are equivalent.
\begin{compactenum}[(i)]
\item $G$ contains a blocking quadruple.
\item $G$ contains an induced subgraph isomorphic to a graph in Figure~\ref{fig:forb}.
\end{compactenum}
\end{theorem}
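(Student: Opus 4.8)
\noindent I would prove the two implications separately, with (ii)$\Rightarrow$(i) the routine direction and (i)$\Rightarrow$(ii) the substantial one. For (ii)$\Rightarrow$(i) the key observation is that the blocking-quadruple property is inherited by induced supergraphs when the witnessing paths are internal: if $H$ is an induced subgraph of $G$ and four vertices form a BQ of $H$ whose avoiding-paths all lie inside $V(H)$, then, since $H$ and $G$ induce the same adjacencies on $V(H)$, the same paths and the same missing vertices witness the BQ in $G$. It then suffices to exhibit, for each graph in Figure~\ref{fig:forb}, the quadruple $\{a,b,c,d\}$ together with an internal avoiding-path for each of the three pair-splits; for the finite graphs this is a direct check, and for the two parametrized families (the variable-length path in Figure~\ref{fig:forb}(c) and the ``$\geq 2$ edges'' path in Figure~\ref{fig:forb}(d)) the same paths work for every length.

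For (i)$\Rightarrow$(ii) I would proceed in three stages. \emph{Stage 1 (extract an asteroidal triple).} Let $\{a,b,c,d\}$ be a BQ, hence an independent set of size four. Each of the three pair-splits $\{a,b\}\mid\{c,d\}$, $\{a,c\}\mid\{b,d\}$, $\{a,d\}\mid\{b,c\}$ selects, by the definition of ``avoid'', one of its two pairs as carrying a connecting path missed by the other two. Regarding the three selected pairs as edges on $\{a,b,c,d\}$, one from each of the three perfect matchings of $K_4$, an elementary case check shows they form either a triangle or a star. If they form a triangle, the three chosen paths exhibit the three triangle-vertices as an asteroidal triple all of whose paths miss the remaining vertex $v$, hence live in $G-N[v]$. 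If they form a star centred at a vertex $u$, then concatenating the chosen paths at their common endpoint $u$ yields, for the other three vertices, an asteroidal triple whose paths pass through $u$. Either way three of the four BQ-vertices form an asteroidal triple, with the fourth either far from it or central to it.

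\emph{Stage 2 (pin down a chordal AT-core).} Using the near-chordal hypothesis, deleting a suitably chosen closed neighbourhood leaves a chordal graph, and in a chordal graph an asteroidal triple forces one of the classical Lekkerkerker--Boland chordal obstructions \cite{lekker62} as an induced subgraph; passing to a minimal witness (shortest paths, fewest vertices) reduces its shape to a net-like triangle with three legs, or to the subdivided and fanned variants from which the parametrized paths of Figure~\ref{fig:forb}(c,d) arise. \emph{Stage 3 (reattach the fourth vertex).} I would then determine the adjacencies of the fourth BQ-vertex to this core; the BQ constraints together with minimality bound these sharply, forcing the fourth vertex to be either isolated from the core (giving Figure~\ref{fig:forb}(a)--(c)) or attached precisely as the central or triangle-apex vertex $d$ of Figure~\ref{fig:forb}(d)--(g). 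Matching each surviving configuration against the seven diagrams then completes the implication.

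The main obstacle I expect lies in Stages 2--3, precisely because the hypothesis is \emph{nearly} chordal rather than chordal, so Lekkerkerker--Boland cannot be applied to $G$ itself. In the triangle case of Stage 1 one can delete $N[v]$ and work in the resulting chordal graph, but in the star case the fourth vertex $u$ sits on all the asteroidal paths, so its neighbourhood cannot be deleted without destroying the triple; reconciling this either requires applying near-chordality with respect to a different vertex or a more hands-on extraction of a chordal subgraph carrying the triple. The remaining delicate point is to make the reattachment analysis simultaneously exhaustive (every adjacency pattern of the fourth vertex is accounted for) and tight (no configuration outside Figure~\ref{fig:forb} is ever needed), and it is the star-versus-triangle dichotomy of Stage 1 that keeps this case analysis finite and anchors the whole argument to the asteroidal-triple picture underlying the figure.
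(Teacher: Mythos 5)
Your overall architecture is close to the paper's: the (ii)$\Rightarrow$(i) direction is indeed the routine check that each labelled quadruple $a,b,c,d$ in Figure~\ref{fig:forb} is a BQ with internal witnessing paths, and your triangle case of Stage~1 corresponds to the paper's reduction via ``nearly interval'' graphs (Lemma~\ref{lem:forb1}): if some $G-N[v]$ contains an asteroidal triple, it is a chordal non-interval graph, Lekkerkerker--Boland gives one of the five chordal obstructions, and reattaching $v$ (or not) yields Figure~\ref{fig:forb}(a)--(e). Your star-versus-triangle dichotomy is also sound as stated.

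The genuine gap is exactly the obstacle you flag and do not resolve: the star case. There your Stage~2 has nothing to run on, because when the three avoiding paths all emanate from a common centre $a$, there need not exist \emph{any} vertex $w$ for which $G-N[w]$ contains the asteroidal triple $\{b,c,d\}$. Graph (g) of Figure~\ref{fig:forb} is a concrete witness: it has a BQ, yet $G-N[w]$ is an interval graph for every vertex $w$, so no deletion of a closed neighbourhood ever produces a chordal AT-core to which Lekkerkerker--Boland could be applied, and the ``reattach the fourth vertex'' plan of Stage~3 never gets off the ground. The paper's Lemma~\ref{lem:forb2} handles precisely this residual case (a BQ in a nearly interval graph) by an entirely different mechanism: it fixes the star centre $a$ and paths $P_b,P_c,P_d$ minimizing $|P_b|+|P_c|+|P_d|$, and then proves a chain of claims (the paths are induced, pairwise intersect only in $a$, every cross edge is incident to $N(a)$, neighbours propagate towards $a$ along a path, long paths exclude simultaneous cross edges, etc.), concluding that the union $P_b\cup P_c\cup P_d$ itself induces one of the configurations (d)--(g). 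No appeal to the interval-graph obstruction theorem is made in this half; it is replaced by the minimality bookkeeping. Without some substitute for that analysis, your proposal proves only the triangle case, which is the easier half of (i)$\Rightarrow$(ii).
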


The proof of this theorem is split into the following two claims.

\begin{lemma}\label{lem:forb1}
Let $G$ be a nearly chordal graph. If $G$ is not a nearly interval graph, then
$G$ contains one of the graphs in Figure~\ref{fig:forb} as an induced subgraph.
\end{lemma}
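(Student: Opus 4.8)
The plan is to prove the contrapositive in a structured way. We are given a nearly chordal graph $G$ that is \emph{not} nearly interval, and we must locate one of the subgraphs of Figure~\ref{fig:forb} inside $G$. By definition, $G$ not being nearly interval means there is some vertex $v$ such that $H = G - N[v]$ fails to be an interval graph. Since $G$ is nearly chordal, $H$ is chordal (being of the form $G - N[w]$ for an appropriate $w$, or directly a subgraph of the chordal $G-N[v]$). So I would first invoke the Lekkerkerker--Boland characterization~\cite{lekker62}: a chordal graph is an interval graph if and only if it contains no asteroidal triple. Hence $H$ contains an asteroidal triple $a,b,c$, meaning between any two of them there is a path missed by the third, and these paths live entirely in $H = G - N[v]$, so $v$ misses (has no neighbour on) all of them.

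The key observation is then that $v$ together with $a,b,c$ is the seed of a blocking quadruple, and more importantly of an induced copy of one of the Figure~\ref{fig:forb} graphs. First I would take a \emph{minimal} asteroidal triple configuration in $H$, i.e. shortest connecting paths $P_{ab}, P_{bc}, P_{ac}$ and minimal witnessing subgraph, so that the asteroidal triple together with its three paths induces one of the known minimal non-interval chordal graphs. By the classical structure of minimal asteroidal triples in chordal graphs, the induced subgraph on $\{a,b,c\}$ and these minimal paths is one of the small patterns appearing as the $a,b,c$-part of the Figure~\ref{fig:forb} graphs. Crucially, $v$ is nonadjacent to every vertex of this witness (since the whole witness lives in $G - N[v]$), and $v$ plays exactly the role of the fourth vertex $d$ in the figures.

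The main case analysis, and the step I expect to be the main obstacle, is matching the precise minimal asteroidal-triple witness in $H$ against the precise list of graphs in Figure~\ref{fig:forb}. The figure splits into two families: those where $d$ has degree zero (the first three graphs, arising when the asteroidal triple $a,b,c$ sits in a component not containing $v$, or $v$ is isolated from the relevant part), and those where $d$ attaches to the triple's apex in various ways (the second row). So I would organize the argument by how $v$ relates to the connected component of $H$ containing the asteroidal triple and to the apex/center vertices of the witness. The subtlety is that $v$ may have neighbours elsewhere in $G$ that create unwanted edges, so I must choose $v$ and the witness to keep the induced subgraph clean; chordality is what controls the possible chord structure and forces the witness paths to be short, pinning the configuration down to the finite list. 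Once the correspondence between the finitely many minimal chordal asteroidal-triple witnesses and the graphs of Figure~\ref{fig:forb} is established, each of the figure's graphs appears as the induced subgraph $G[\{v\} \cup W]$, where $W$ is the chosen minimal witness, completing the proof.
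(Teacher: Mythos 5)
Your overall skeleton matches the paper's: pick $v$ with $G-N[v]$ not interval, use near-chordality to get that $G-N[v]$ is chordal, extract a minimal non-interval witness $W$ there, and combine with $v$. But there are two genuine problems. First, you leave the central step --- classifying the minimal witnesses --- as an unexecuted plan: you propose to take a ``minimal asteroidal triple configuration'' and argue it must be one of finitely many patterns, which amounts to re-deriving the Lekkerkerker--Boland classification. The paper sidesteps this entirely by directly citing the known list of \emph{chordal} minimal forbidden induced subgraphs of interval graphs (the five graphs of Figure~\ref{fig:interval}, from \cite{lekker62}); since $G-N[v]$ is chordal and not interval, it contains one of these five graphs as an induced subgraph, with no further work.

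Second, and more substantively, your claim that ``$v$ plays exactly the role of the fourth vertex $d$ in the figures'' cannot work for all cases. By construction $v$ has \emph{no} neighbour in $G-N[v]$, hence no neighbour in the witness $W$; so $G[W\cup\{v\}]$ always has $v$ as a vertex isolated from $W$, and can only ever match the figures in which $d$ is nonadjacent to the rest (graphs a)--c) of Figure~\ref{fig:forb}). The second-row graphs, where $d$ attaches to the witness, can never arise by adjoining $v$, so your proposed case analysis ``by how $v$ relates to the apex'' is vacuous --- $v$ relates to it in exactly one way. The resolution, which your write-up misses, is that when the witness is of type d) or e) of Figure~\ref{fig:interval} one should \emph{discard} $v$ altogether: those two graphs already appear verbatim in Figure~\ref{fig:forb} (as d) and e), with one of their own internal vertices relabelled $d$), so $W$ by itself is the required induced subgraph of $G$. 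Without this observation your argument produces, in those cases, a graph (witness plus isolated vertex) that is not on the list, and the proof does not close. Your worry about $v$'s neighbours ``elsewhere in $G$ creating unwanted edges'' is, by contrast, unfounded: only edges inside $\{v\}\cup W$ matter for the induced subgraph, and there are none from $v$.
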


\begin{proof}
Suppose that $G$ is not a nearly interval graph. Then there exists a vertex
$v\in V(G)$ such that $G-N[v]$ is not an interval graph. Since $G$ is a nearly
chordal graph, we have that $G-N[v]$ is a chordal graph. Thus $G-N[v]$ contains
as an induced subgraph one of the graphs a)-e) shown in Figure
\ref{fig:interval} \cite{lekker62}.

If $G-N[v]$ contains the graphs d) or e) from Figure~\ref{fig:interval}, then so
does $G$ and the two graphs are also in Figure~\ref{fig:forb}, namely d) and e),
respectively.  So we may assume that $G-N[v]$ contains one of the graphs a)-c)
shown in Figure~\ref{fig:interval}, which together with $v$ yields an induced
subgraph of $G$ that is one of  the graphs a)-c) shown in Figure~\ref{fig:forb}.
This concludes the proof.
\end{proof}

\begin{figure}[!ht]
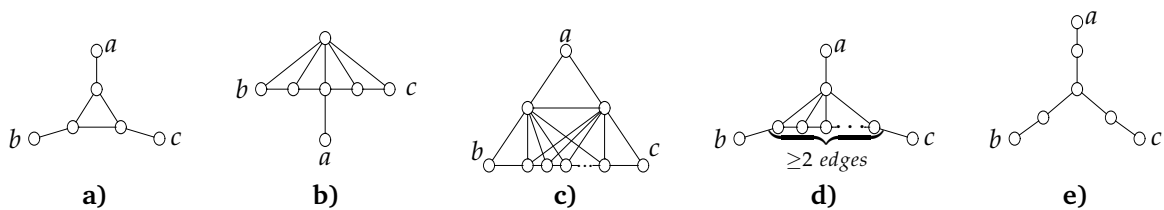
\centering
\begin{tabular}{@{}c@{\qquad}c@{\qquad}c@{\qquad}c@{\qquad}c@{}}
$\xy/r1.5pc/: (0,0.8)*[o][F]{\phantom{s}}="x1"; (1,0.8)*[o][F]{\phantom{s}}="x2";
(0.5,1.6)*[o][F]{\phantom{s}}="x3"; (0.5,2.4)*[o][F]{\phantom{s}}="x4";
(-0.8,0.56)*[o][F]{\phantom{s}}="x5"; (1.8,0.56)*[o][F]{\phantom{s}}="x6";
(0,0)*{}; {\ar@{-} "x1";"x2"}; {\ar@{-} "x1";"x3"}; {\ar@{-} "x2";"x3"};
{\ar@{-} "x3";"x4"}; {\ar@{-} "x1";"x5"}; {\ar@{-} "x2";"x6"};
"x4"+(0.3,0.1)*{a}; "x5"+(-0.4,0)*{b}; "x6"+(0.35,0)*{c}; \endxy$
& 
$\xy/r2.0pc/: (-0.5,1.2)*[o][F]{\phantom{s}}="x1";
(1.5,1.2)*[o][F]{\phantom{s}}="x2"; (0.5,2.0)*[o][F]{\phantom{s}}="x3";
(0.5,0.4)*[o][F]{\phantom{s}}="x4"; (0,1.2)*[o][F]{\phantom{s}}="x71";
(0.5,1.2)*[o][F]{\phantom{s}}="x72"; (1,1.2)*[o][F]{\phantom{s}}="x73";
(0,0)*{}; {\ar@{-} "x1";"x71"}; {\ar@{-} "x71";"x72"}; {\ar@{-} "x72";"x73"};
{\ar@{-} "x73";"x2"}; {\ar@{-} "x1";"x3"}; {\ar@{-} "x2";"x3"};
{\ar@{-} "x72";"x4"}; {\ar@{-} "x3";"x71"}; {\ar@{-} "x3";"x72"};
{\ar@{-} "x3";"x73"}; "x4"+(0.0,-0.3)*{a}; "x1"+(-0.3,0)*{b};
"x2"+(0.35,0)*{c};\endxy$
& 
$\xy/r1.2pc/: (0,3)*[o][F]{\phantom{s}}="1"; (-2,0)*[o][F]{\phantom{s}}="2";
(2,0)*[o][F]{\phantom{s}}="3"; (-1,1.5)*[o][F]{\phantom{s}}="4";
(-1,0)*[o][F]{\phantom{s}}="51"; (-0.5,0)*[o][F]{\phantom{s}}="52";
(0,0)*[o][F]{\phantom{s}}="53"; (0.5,0)*{_{\ldots}};
(1,0)*[o][F]{\phantom{s}}="54"; (1,1.5)*[o][F]{\phantom{s}}="6";
{\ar@{-} "1";"4"}; {\ar@{-} "1";"6"}; {\ar@{-} "2";"4"}; {\ar@{-} "2";"51"};
{\ar@{-} "3";"54"}; {\ar@{-} "3";"6"}; {\ar@{-} "4";"51"}; {\ar@{-} "4";"52"};
{\ar@{-} "4";"53"}; {\ar@{-} "4";"54"}; {\ar@{-} "4";"6"}; {\ar@{-} "51";"6"};
{\ar@{-} "52";"6"}; {\ar@{-} "53";"6"}; {\ar@{-} "54";"6"}; {\ar@{-} "51";"52"};
{\ar@{-} "52";"53"}; {\ar@{-} "53";"53"+(0.3,0)}; {\ar@{-} "54";"54"-(0.3,0)};
"1"+(0,0.4)*{a}; "2"+(-0.3,0.4)*{b}; "3"+(0.3,0.4)*{c}; \endxy$
& 
$\xy/r1.5pc/: (-0.5,0.8)*[o][F]{\phantom{s}}="x1";
(1.5,0.8)*[o][F]{\phantom{s}}="x2"; (0.5,1.6)*[o][F]{\phantom{s}}="x3";
(0.5,2.4)*[o][F]{\phantom{s}}="x4"; (-1.3,0.56)*[o][F]{\phantom{s}}="x5";
(2.3,0.56)*[o][F]{\phantom{s}}="x6"; (0,0.8)*[o][F]{\phantom{s}}="x71";
(0.5,0.8)*[o][F]{\phantom{s}}="x72"; (1,0.8)*{\ldots}; (0,0)*{};
{\ar@{-} "x1";"x71"}; {\ar@{-} "x71";"x72"}; {\ar@{-} "x72";"x72"+(0.2,0)};
{\ar@{-} "x2";"x2"-(0.2,0)}; {\ar@{-} "x1";"x3"}; {\ar@{-} "x2";"x3"};
{\ar@{-} "x3";"x4"}; {\ar@{-} "x1";"x5"}; {\ar@{-} "x2";"x6"};
{\ar@{-} "x3";"x71"}; {\ar@{-} "x3";"x72"};
"x72"+(0,-0.5)*{\underbrace{\hspace{4.3em}}_{\geq 2~edges}}; "x4"+(0.3,0.1)*{a};
"x5"+(-0.4,0)*{b}; "x6"+(0.35,0)*{c}; \endxy$
& 
$\xy/r1.5pc/: (-0.2,1.0)*[o][F]{\phantom{s}}="x1";
(1.2,1.0)*[o][F]{\phantom{s}}="x2"; (0.5,1.6)*[o][F]{\phantom{s}}="x3";
(0.5,2.4)*[o][F]{\phantom{s}}="x4"; (-0.8,0.56)*[o][F]{\phantom{s}}="x5";
(1.8,0.56)*[o][F]{\phantom{s}}="x6"; (0.5,3.0)*[o][F]{\phantom{s}}="x7";
(0,0)*{}; {\ar@{-} "x1";"x3"}; {\ar@{-} "x2";"x3"}; {\ar@{-} "x3";"x4"};
{\ar@{-} "x1";"x5"}; {\ar@{-} "x2";"x6"}; {\ar@{-} "x4";"x7"};
"x5"+(-0.4,0)*{b}; "x6"+(0.35,0)*{c}; "x7"+(0.3,0.1)*{a};\endxy$
\\ \bf a) & \bf b) & \bf c) & \bf d) & \bf e)
\end{tabular}
\caption{All chordal minimal forbidden induced subgraphs of interval
graphs.\label{fig:interval}}
\end{figure} 

\begin{lemma}\label{lem:forb2}
Let $G$ be a nearly interval graph. If $G$ has a blocking quadruple, then
$G$ contains one of the graphs in Figure~\ref{fig:forb} as an induced subgraph.
\end{lemma}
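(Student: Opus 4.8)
The plan is to show first that a blocking quadruple of a nearly interval graph always forces an asteroidal triple of $G$, and then to extract from this asteroidal triple together with the fourth vertex a minimal induced configuration matching one of the \emph{attached} graphs d)--g) of Figure~\ref{fig:forb}.

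So, let $\{a,b,c,d\}$ be a blocking quadruple of $G$; recall it is an independent set. For each of the three ways of splitting $\{a,b,c,d\}$ into two pairs, the defining avoid-condition supplies a path joining one of the two pairs and missed by the other two vertices. Selecting one witnessing pair for each of the three splits yields $2^3=8$ possible selections, and I claim each is of exactly one of two types. Either the three chosen paths share a common endpoint $v\in\{a,b,c,d\}$, in which case the remaining three vertices form an asteroidal triple of $G$ routed through $v$ (each chosen path being missed by the other two of those three); or the three chosen paths all avoid $N[v]$ for the opposite vertex $v$ and pairwise join the remaining three vertices, so that these three form an asteroidal triple inside $G-N[v]$. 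Since $G$ is nearly interval, $G-N[v]$ is an interval graph and hence has no asteroidal triple, so the second type is impossible. As the blocking quadruple guarantees that at least one selection has all three of its paths present, some selection of the first type must occur. After relabelling we may therefore assume that $\{a,b,c\}$ is an asteroidal triple of $G$ and that $d$ is joined to each of $a,b,c$ by a path missed by the remaining two; equivalently, the asteroidal paths witnessing that $\{a,b,c\}$ is an asteroidal triple may be taken to pass through $d$.

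Next I would pass to a minimal witness. Fix shortest legs $R_a,R_b,R_c$ from $d$ to $a,b,c$ with $R_x$ missed by the other two of $\{a,b,c\}$, let $F$ be the subgraph induced by their union, and note that $d$ is adjacent to the first vertex of each leg. Because $F-N[d]$ is an induced subgraph of the interval graph $G-N[d]$, the triple $\{a,b,c\}$ is not an asteroidal triple of $F-N[d]$; hence every asteroidal path between two of $a,b,c$ must meet $N[d]$, that is, the three legs genuinely converge at $d$, and $d$ cannot be deleted without destroying the asteroidal triple. This already rules out the graphs a)--c) of Figure~\ref{fig:forb}, whose fourth vertex is isolated: any induced copy in $G$ would leave its interval obstruction on the remaining vertices inside $G-N[\,\cdot\,]$ of that isolated vertex, contradicting near-interval-ness. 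Thus the witness we seek lies among the attached configurations d)--g).

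Finally I would classify $F$ by the local structure at $d$ --- the mutual adjacencies among the neighbours of $d$ on the three legs --- and by the lengths of the legs, using that $G-N[a],G-N[b],G-N[c]$ are interval as well (a leg or base that is long and independent would reproduce one of the long obstructions of Figure~\ref{fig:interval} inside some $G-N[v]$). I expect the surviving possibilities to be exactly: three independent legs forming the subdivided claw (graph e)); one adjacency among the neighbours of $d$ (graph f)); a full triangle on those neighbours, i.e. the net with $d$ joined to its triangle (graph g)); and the asteroidal triple realised through a long base path carrying $d$ (graph d)). The main obstacle will be precisely this classification: showing that after minimisation nothing outside d)--g) can occur, which requires controlling every potential chord and every shared vertex among the three legs, and excluding each forbidden adjacency by applying the interval property to the appropriate deletion $G-N[v]$. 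The parity dichotomy of the first step is the conceptual crux, but it is this bookkeeping that constitutes the bulk of the work. A minor technical point dispatched along the way is that the legs may share vertices other than $d$; this is harmless, since the missed-by property is inherited by any sub-path, so the asteroidal paths can still be extracted from the union of the legs.
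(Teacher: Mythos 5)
Your opening move is correct and coincides with the paper's: the paper likewise first shows (its Claim~\ref{clm:1}) that in a nearly interval graph the three witnessing paths of a BQ cannot pairwise join three of the four vertices while avoiding the fourth (that would be an asteroidal triple in $G-N[v]$), and hence must form a ``star'' of three paths out of a common vertex, each missed by the other two endpoints. Your $2^3$-selections/two-types framing is a clean way to package this, and your side remark that a nearly interval graph cannot contain configurations a)--c) of Figure~\ref{fig:forb} is also correct.

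However, there is a genuine gap: everything after that point is asserted rather than proved. The statement to be shown is that $G$ contains one of the graphs d)--g) \emph{as an induced subgraph}, and your proposal stops at ``I expect the surviving possibilities to be exactly \dots'' and ``the main obstacle will be precisely this classification.'' That classification is not a routine afterthought --- it is the bulk of the paper's proof, occupying Claims~\ref{clm:2} through \ref{clm:10} plus a final case analysis: one must show the three legs meet only at the centre, that every chord between two legs is incident to a neighbour of the centre, that a vertex of one leg adjacent to another leg sees an entire initial segment of it, that two long legs cannot be joined by chords, etc., before the union of the legs can be identified with one of d)--g). None of these steps is supplied or even sketched. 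A secondary issue is your choice of minimal witness: you take individually shortest legs for a \emph{fixed} quadruple, whereas the paper minimises $|P_b|+|P_c|+|P_d|$ over all quadruples and all triples of paths simultaneously; several of the key claims derive their contradiction by exhibiting a \emph{different} blocking quadruple with strictly smaller total, which your weaker minimisation does not directly license (it can be repaired by an induction on the total, but that needs to be said). As written, the proposal establishes the correct starting configuration but does not prove the lemma.
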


\begin{proof}
Suppose that $G$ contains a blocking quadruple $a,b,c,d$.  In particular, this
implies that $G$ contains an $ab$-path missed by both $c$ and $d$, or a
$cd$-path missed by both $a$ and $b$. Similarly, we have that $G$ contains an
$ac$-path missed by $b,d$, or a $bd$-path missed by $a,c$.

By symmetry between $a,b,c,d$, we may assume without loss of generality that $G$
contains an $ab$-path $P_b$ missed by $c,d$ and an $ac$-path $P_c$ missed by
$b,d$. (If not we replace $a$ by one of $b,c,d$ to satisfy this.) 
If $G$ also contains a $bc$-path missed by $a$ and $d$, then the vertices
$a,b,c$ form an asteroidal triple in $G-N[d]$. Thus $G-N[d]$ is not an interval
graph, contradicting our assumption that $G$ is a nearly interval graph.

We therefore conclude that every $bc$-path of $G$ contains a neighbour of $a$ or
$d$. Since $a,b,c,d$ is a blocking quadruple, this implies that $G$ contains an
$ad$-path $P_d$ missed by both $b$ and $c$.  Moreover, since $G$ is a nearly
interval graph, we have that every $bd$-path contains a neighbour of $a$ or
$c$, and every $cd$-path contains a neighbour of $a$ or $b$. (Otherwise
$G-N[c]$ or $G-N[b]$ is not an interval graph, providing a similar contradiction
as above.) For later use, we summarize these observations as follows.

\begin{claim}\label{clm:1}
Every $bc$-path of $G$ contains a neighbour of $a$ or $d$, every $bd$-path
contains a neighbour of $a$ or $c$, and every $cd$-path contains a neighbour of
$a$ or $b$.
\end{claim}

Now, without loss of generality, we shall assume that the quadruple $a,b,c,d$
and the paths $P_b$, $P_c$, $P_d$ were chosen so that $|P_b|+|P_c|+|P_d|$ is
smallest possible. (If not, we simply replace $a,b,c,d$ or paths $P_b,P_c,P_d$
by a different quadruple or paths where the value is smaller and repeat
the above.)  The minimality of this choice implies that $P_b$ is a shortest
$ab$-path in $G-N[c]-N[d]$. Similarly, $P_c$ is a shortest $ac$-path in
$G-N[b]-N[d]$, and $P_d$ is a shortest $ad$-path in $G-N[b]-N[c]$.  In
particular, the three paths are induced paths of $G$.

Observe that the paths $P_b$, $P_c$, $P_d$ share at least one vertex, namely
$a$. We prove that the minimality of our choice guarantees that $a$ is, in fact,
the only vertex that these paths share. This is proved as follows.

\begin{claim}\label{clm:2}
$P_b\cap P_c\cap P_d=\{a\}$. 
\end{claim}

\begin{proofclaim}
Suppose that $P_b\cap P_c\cap P_d\neq\{a\}$.  Let $x$ be the vertex on $P_b$
such that $x\in P_c\cap P_d$ and no internal vertex of the subpath of $P_b$
between $x$ and $b$ is in $P_c\cap P_d$. Clearly, $x$ is well-defined, since
$a\in P_b\cap P_c\cap P_d$.  Moreover, we have $x\neq a$ by the choice of
$x$ and the fact that $P_b\cap P_c\cap P_d\neq\{a\}$.  Denote by $P_b'$ the
subpath of $P_b$ between $x$ and $b$. Similarly, since $x\in P_c\cap P_d$,
denote by $P'_c$ the subpath of $P_c$ between $x$ and $c$, and by $P'_d$ the
subpath of $P_d$ between $x$ and $d$.  Observe that $P'_b$ is missed by both $c$
and $d$, since $P_b$ is.  Similary, $P'_c$ is missed by $b,d$ and $P'_d$ is
missed by $b,c$, since $P_c$, respectively, $P_d$ is. We deduce that $x,b,c,d$
is a blocking quadruple of $G$, and $|P'_b|+|P'_c|+|P'_d| < |P_b|+|P_c|+|P_d|$,
since $x\neq a$.  But that contradicts the choice of $a,b,c,d$.

\end{proofclaim}

\begin{claim}\label{clm:3}
$P_b\cap P_c=P_b\cap P_d=P_c\cap P_d=\{a\}$.
\end{claim}

\begin{proofclaim}
By symmetry, suppose without loss of generality that $P_b\cap (P_c\cup
P_d)\neq\{a\}$. Recall that $a\in P_b\cap P_c\cap P_d$.  Let $x$ be the vertex
of $P_b$ such that $x\in P_c$ and no internal vertex of the subpath of $P_b$
between $x$ and $b$ is in $P_c$.  Let $y$ be the vertex of $P_b$ such
that $y\in P_d$ and no internal vertex of the subpath of $P_b$ between $y$
and $b$~is~in~$P_d$. 

Suppose that $x=y$.  If $x=y=a$, then we conclude $P_b\cap P_c=\{a\}$  and
$P_b\cap P_d=\{a\}$ by the choice of $x$ and $y$. However, we assume that
$P_b\cap (P_c\cup P_d)\neq\{a\}$.  Thus $x=y\neq a$ in which case we contradict
\ref{clm:2}, since $x\in P_b\cap P_c$ and $y\in P_b\cap P_d$.  Therefore, we
must conclude that $x\neq y$.

Now, by symmetry, we shall assume without loss of generality that $x$ belongs to
the subpath of $P_b$ between $y$ and $b$.  Denote by $P'_b$ the subpath of $P_b$
between $x$ and $b$, and by $P^{yx}_b$ the subpath of $P_b$ between $y$ and $x$.
Since $x\in P_c$, denote by $P'_c$ the subpath of $P_c$ between $x$ and $c$.
Also, since $y\in P_d$, denote by $P^y_d$ the subpath of $P_d$ between $d$ and
$y$, and define $P'_d=P^y_d\cup P^{yx}_b$.  Note that $P'_d$ is a path between
$d$ and $x$, since $P^{yx}_b\cap P_d=\{y\}$ by the choice of $y$.
Further, observe that both $P'_b$ and $P^{yx}_b$ are missed by $c$ and $d$, since
$P_b$ is. Similarly, $P'_c$ is missed by both $b$ and $d$, since $P_c$ is.
Further, note that $xb\not\in E(G)$, since $x\in P_c$ and $P_c$ is missed by
$b$.  This implies that $P^{yx}_b$ is missed by $b$, since $P_b$ is an induced
path.  Therefore, $P'_d$ is missed by both $b$ and $c$, since both $P_d$ and
$P^{yx}_b$ are.
This shows that $x,b,c,d$ is a blocking quadruple. However, we observe that
$a\neq x$, since $x\neq y$ and $x$ belongs to the subpath of $P_b$ between $y$
and $b$. 

Thus $|P_c'|<|P_c|$ which implies $|P'_b|+|P'_c|+|P'_d| =
|P'_b|+|P'_c|+|P^{yx}_b|+|P^y_d|-1 \leq |P_b|+|P'_c|+|P_d| < |P_b|+|P_c|+|P_d|$
which contradicts our choice of the quadruple $a,b,c,d$.
\end{proofclaim}

In the next claims, we analyze the edges between the paths $P_b$, $P_c$, $P_d$.

\begin{claim}\label{clm:4}
For every edge $uv$ of $G$ such that $u,v\in P_b\cup P_c\cup P_d$, we have that
$u\in N(a)$ or $v\in N(a)$ or $uv$ is an edge of one of the paths $P_b,P_c,P_d$.
\end{claim}

\begin{proofclaim}
Consider an edge $uv\in E(G)$ where $u,v\in P_b\cup P_c\cup P_d$.  For
contradiction, suppose that $u,v\not\in N(a)$ and that $uv$ is not an edge of
one of the paths $P_b,P_c,P_d$. By symmetry, we may assume without loss of
generality that $u\in P_b$ and $v\in P_c$.  Let $P^u_b$ denote the subpath of
$P_b$ between $b$ and $u$, and let $P^v_c$ denote the subpath of $P_c$ between
$v$ and $c$.  Note that both $P^u_b$ and $P^v_c$ are missed by $a$, since
$u,v\not\in N(a)$ and both $P_b$ and $P_c$ are induced paths.
Define $P'=P^u_b\cup P^v_c$, and observe that $P'$ is a path by \ref{clm:3}.
Moreover, $P'$ is missed by $d$, since both $P_b$ and $P_c$ are. Also, $P'$ is
missed by $a$, since both $P^u_b$ and $P^v_c$ are.  But then $P'$ is a $bc$-path
missed by both $a$ and~$d$, contradicting \ref{clm:1}.
\end{proofclaim}

\begin{claim}\label{clm:7b}
Suppose that $u\in P_c\cup P_d$ has a neighbour $v$ on $P_b$. Then $u$ is
adjacent to every internal vertex of the subpath of $P_b$ between $a$ and $v$.
\end{claim}

\begin{proofclaim}
By symmetry (replacing $c$ by $d$ and vice-versa), we may assume without loss of
generality that $u\in P_c$.

For contradiction, suppose that $u$ is not adjacent to some internal vertex of
the subpath of $P_b$ between $a$ and $v$. This implies that $v\not\in N(a)$, and
hence, $u\in N(a)$ by \ref{clm:4}. From this we conclude that $u$ has
distinct neighbours $x,y$ on $P_b$ (possibly $\{x,y\}\cap \{a,v\}\neq\emptyset$) such
that $x,y$ are not consecutive on $P_b$ and no internal vertex of the subpath of
$P_b$ between $x$ and $y$ is adjacent to $u$. Let $P^{xy}_b$ denote this
subpath.  Note that $P^{xy}_b$ is an induced path, since $P_b$ is. So, since
$x,y$ are not consecutive on $P_b$, we have $|P^{xy}_b|\geq 3$.  Further, recall
that the only neighbours of $u$ on $P^{xy}_b$ are $x$ and $y$. From this we
conclude that $C=P^{xy}_b\cup\{u\}$ induces in $G$ a cycle of length four or
more. In particular, no vertex of $C$ is adjacent to $d$, since $C\subseteq
P_b\cup P_c$ and both $P_b$ and $P_c$ are missed by $d$.  Thus $C$ is an induced
cycle of length four or more in $G-N[d]$, contradicting our assumption that $G$
is a nearly interval graph.
\end{proofclaim}

Using the symmetry between the paths $P_b$, $P_c$, $P_d$, we may generalize this
to the following statement.

\begin{claimplus}\label{clm:7}
Suppose that $u\in P_c\cup P_d$ has a neighbour $v$ on $P_b$. Then $u$ is
adjacent to every internal vertex of the subpath of $P_b$ between $a$ and $v$.

Suppose that $u\in P_b\cup P_d$ has a neighbour $v$ on $P_c$. Then $u$ is
adjacent to every internal vertex of the subpath of $P_c$ between $a$ and $v$.

Suppose that $u\in P_b\cup P_c$ has a neighbour $v$ on $P_d$. Then $u$ is
adjacent to every internal vertex of the subpath of $P_d$ between $a$ and $v$.
\end{claimplus}

\begin{claim}\label{clm:5b}
Suppose that $|P_b|\geq 4$. If there are edges between the internal vertices of
$P_b$ and $P_c$, then there are no edges between the internal
vertices of $P_b$ and $P_d$.
\end{claim}

\begin{proofclaim}
For contradiction, suppose that some internal vertex $v$ of $P_b$ has a
neighbour $x\not\in\{a,c\}$ on $P_c$, and some internal vertex $w$ of $P_b$
(possibly $v=w$) has a neighbour $y\not\in\{a,d\}$ on $P_d$.  Let $u$ denote the
neighbour of $a$ on $P_b$. From \ref{clm:7} applied to $x$ and $v$, we conclude
that $ux\in E(G)$.  Similarly, using \ref{clm:7} applied to $y$~and~$w$, we
obtain $uy\in E(G)$.

Now, recall that $|P_b|\geq 4$.  This implies that $ub\not\in E(G)$, since $P_b$
is an induced path.  Let $P^u_b$ be the subpath of $P_b$ between $u$ and $b$,
let $P^x_c$ be the subpath of $P_c$ between $x$ and $c$, and let $P^y_d$ be the
subpath of $P_d$ between $y$ and $d$. 
Define $P'_b=P^u_b$, $P'_c=\{u\}\cup P^x_c$, and $P'_d=\{u\}\cup P^y_d$.
Clearly, both $P'_c$ and $P'_d$ are paths of $G$, since $u\not\in P_c\cup P_d$
by \ref{clm:3}.  Moreover, $uc,ud\not\in E(G)$ because $u\in P_b$ and $P_b$ is
missed by both $c$ and $d$.

Observe now that $P'_b$ is missed by $c$ and $d$, since $P_b$ is.  Moreover,
$P'_c$ is missed by both $b$ and $d$, since $ub,ud\not\in E(G)$ and since $P_c$
is missed by both $b$ and $d$.  Similarly, $P'_d$ is missed by both $b$ and $c$,
since $ub,uc\not\in E(G)$ and since $P_d$ is missed by both $b$ and $c$.  Thus
we conclude that $u,b,c,d$ is a blocking quadruple of $G$. In particular, we have
$|P'_c|\leq |P_c|$ and $|P'_d|\leq |P_d|$, since $a\not\in\{x,y\}$. Also,
$|P'_b|<|P_b|$ by the definition of $u$. So we conclude that
$|P'_b|+|P'_c|+|P'_d| < |P_b|+ |P_c|+|P_d|$, which contradicts the minimality of
our choice of $a,b,c,d$ and paths $P_b$, $P_c$, $P_d$.
\end{proofclaim}

A symmetric argument (using $P_c$ or $P_d$ in place of $P_b$) yields the following.

\begin{claimplus}\label{clm:5}
Suppose that $|P_b|\geq 4$. If there are edges between the internal vertices of
$P_b$ and $P_c$, then there are no edges between the internal
vertices of $P_b$ and $P_d$.

Suppose that $|P_c|\geq 4$. If there are edges between the internal vertices of
$P_c$ and $P_b$, then there are no edges between the internal
vertices of $P_c$ and $P_d$.

Suppose that $|P_d|\geq 4$. If there are edges between the internal vertices of
$P_d$ and $P_b$, then there are no edges between the internal
vertices of $P_d$ and $P_c$.
\end{claimplus}

\begin{claim}\label{clm:6b}
Suppose that $|P_b|\geq 4$ and $|P_c|\geq 4$. Then there are no edges between
the internal vertices of $P_b$ and $P_c$.
\end{claim}

\begin{proofclaim}
For contradiction, suppose that an internal vertex $u$ of $P_b$ is adjacent to
an internal vertex $v$ of $P_c$.  By \ref{clm:4}, we conclude $u\in N(a)$ or
$v\in N(a)$.  By symmetry, we may assume without loss of generality that $u\in
N(a)$. This implies $ub\not\in E(G)$, since $P_b$ is an induced path and
$|P_b|\geq 4$.  Since $v$ is an internal vertex of $P_c$ and $c$ has no
neighbour on $P_b$, we conclude that there exist consecutive vertices $x,y$ on
$P_c$ such that $x\neq a$, $ux\in E(G)$, $uy\not\in E(G)$, and $y$ belongs to
the subpath of $P_c$ between $x$ and $c$ (possibly $y=c$.)  Note that $y$ has no
neighbours on $P_d$, since otherwise we contradict \ref{clm:5} for $P_c$. (For
this, recall that $d$ misses $P_c$, that $|P_c|\geq 4$, that $x\in P_c$ is
adjacent to $u\in P_b$, and that $ay\not\in E(G)$ because $P_c$ is an induced
path and $x$ belongs to the subpath of $P_c$ between $a$ and $y$.)

Let $P'_b$ denote the subpath of $P_b$ between $u$ and $b$. Let $P'_y=\{u,x,y\}$
and $P'_d=\{u\}\cup P_d$.  Note that $P'_b,P'_y,P'_d$ are paths of $G$, since
$u\not\in P_d$ by \ref{clm:3}.  Observe that $P'_y$ is missed by $d$, since
$P'_y\subseteq P_b\cup P_c$ and both $P_b$ and $P_c$ are missed by $d$.  Also,
$P'_y$ is missed by $b$, since $P_c$ is missed by $b$ and since $ub\not\in
E(G)$.  The fact that $ub\not\in E(G)$ also implies that $P'_d$ is missed by
$b$, since $P_d$ is.  Moreover, $P'_d$ is missed by $y$, since $uy\not\in E(G)$
and $y$ has no neighbour on $P_d$ as observed earlier.  Finally, note that
$y\not\in N(a)$ because $x\neq a$ and $P_c$ is an induced path.  Thus we
conclude that $y$ has no neighbour on $P_b$, since by \ref{clm:4} the only
possible neighbour of $y$ on $P_b$ is $u$, because $y\not\in N(a)$, but we have
$uy\not\in E(G)$ by the choice of $y$.  Thus $P'_b$ is missed by $y$, since
$P'_b$ is a subpath of $P_b$.  Moreover, $P'_b$ is missed by $d$, since $P_b$
is.  This shows that $u,b,y,d$ is a blocking quadruple. However, $|P'_b|+|P'_y|+
|P'_d|<|P_b|+|P_c|+|P_d|$, since $|P'_y|=3 < 4\leq |P_c|$ while $|P'_b|+
|P'_d|=|P_b|+|P_d|$. Thus we contradict the minimality of $a,b,c,d$.
\end{proofclaim}

A symmetric argument (using $P_b$, $P_d$ or $P_c$, $P_d$ in place of
$P_b$, $P_c$) yields:

\begin{claimplus}\label{clm:6}
Suppose that $|P_b|\geq 4$ and $|P_c|\geq 4$. Then there are no edges between
the internal vertices of $P_b$ and $P_c$.

Suppose that $|P_b|\geq 4$ and $|P_d|\geq 4$. Then there are no edges between
the internal vertices of $P_b$ and $P_d$.

Suppose that $|P_c|\geq 4$ and $|P_d|\geq 4$. Then there are no edges between
the internal vertices of $P_c$ and $P_d$.
\end{claimplus}

\begin{claim}
Let $v$ be the neighbour of $b$ on $P_b$. Then $v$ has a neighbour on $P_c$ or $P_d$.
\end{claim}

\begin{proofclaim}
For contradiction, suppose that no vertex of $P_c\cup P_d$ is adjacent to $v$.
Note that this implies $v\neq a$ and $va\not\in E(G)$.
Let $P_v$ denote the subpath of $P_b$ betwenn $a$ and $v$.  Clearly, $P_v$ is
missed by both $c$ and $d$, since $P_b$ is. Moreover, $P_c$ and $P_d$ are missed
by $v$, since no vertex of $P_c\cup P_d$ is adjacent to $v$.  Recall that $P_c$
is missed by $d$ and that $P_d$ is missed by $c$. Thus $a,v,c,d$ is a blocking
quadruple. However, since $b\not\in P_v$, we have $|P_v|<|P_b|$, and hence,
$|P_v|+|P_c|+|P_d|<|P_b|+|P_c|+|P_d|$, which contradicts the minimality of
$a,b,c,d$.

\end{proofclaim}

Again, by symmetry (between $b$, $c$, and $d$), we can conclude the following.

\begin{claimplus}\label{clm:9}
The neighbour of $b$ on $P_b$ has a neighbour on $P_c$ or $P_d$.

\quad\,The neighbour of $c$ on $P_c$  has a neighbour on $P_b$ or $P_d$.

\quad\,The neighbour of $d$ on $P_d$ has a neighbour on $P_b$ or $P_c$.
\end{claimplus}

The last claim in this series of statements is as follows.

\begin{claim}
If $|P_b|\geq 4$, then there are edges between the internal vertices
of $P_c$ and $P_d$.
\end{claim}

\begin{proofclaim}
Assume there are no edges between the internal vertices of $P_c$ and $P_d$. Let
$u$ be the neighbour of $b$ on $P_b$. Let $x$ be the neighbour of $a$ on $P_b$.
Clearly, $u\neq a$ and $ua,xb\not\in E(G)$, since $P_b$ is an induced path, and
$|P_b|\geq 4$.

By \ref{clm:9}, $u$ has a neighbour on $P_c$ or $P_d$. By symmetry, we may
assume, without loss of generality, that $u$ has a neighbour $v$ on $P_c$.  Note
that $v\neq a$, since $ua\not\in E(G)$. Also, $v\neq c$, since $c$ misses the
path $P_b$ and $u\in P_b$. Thus $v$ is an internal vertex of $P_c$.  From
\ref{clm:7}, we deduce that $vx\in E(G)$, since $vu\in E(G)$ and $x$ lies on the
subpath of $P_b$ between $a$ and $u$.  By \ref{clm:5}, there are
no edges between the internal vertices of $P_b$ and $P_d$, since $u\in P_b$ is
adjacent to $v\in P_c$ which are both internal vertices of the respective paths.
For the same reason, we have, by \ref{clm:6}, that $|P_c|=3$, namely, that
$P_c=\{a,v,c\}$,

Now, let $w$ denote the neighbour of $a$ on $P_d$.  Let $P_c'=\{x,v,c\}$, let
$P'_w=\{x,a,w\}$, and let $P_b'$ denote the subpath of $P_b$ between $x$ and
$b$.
Note that $vw\not\in E(G)$, since $v\in P_c$, $w\in P_d$, and we assume that
there are no edges between the internal vertices of $P_c$ and $P_d$.  So, we
conclude that $P_c'$ and $P_b'$ are missed by $w$, since $vw\not\in E(G)$ and
$c$ misses $P_d$, while $w\in P_d$, there are no edges between the internal
vertices of $P_b$ and $P_d$, and $b$ misses $P_d$.
Moreover, $P_b'$ is missed by $c$, since $P_b$ is.
Similarly, $P'_c$ and $P'_w$ are missed by $b$, since $P_c$ and $P_d$ are, and
since $xb\not\in E(G)$.  Likewise, $P'_w$ is missed by $c$, since $P'_w\subseteq
P_b\cup P_d$ and $c$ misses both $P_b$ and $P_d$.  This shows that $x,b,c,w$ is
a blocking quadruple, and we have $|P'_c|=|P_c|$, and $|P'_b|<|P_b|$ while
$|P'_w|\leq |P_d|$, since $w$ is an internal vertex of $P_d$. So $|P'_b| +
|P'_c| + |P'_w| < |P_b| + |P_c| + |P_d|$ contradicting our~choice~of~$a,b,c,d$.

\end{proofclaim}

As before, we deduce from this (by symmetry) the following.

\begin{claimplus}\label{clm:10}
If $|P_b|\geq 4$, there are edges between the internal vertices of $P_c$ and $P_d$.

~~~\,If $|P_c|\geq 4$, there are edges between the internal vertices of $P_b$ and $P_d$.

~~~\,If $|P_d|\geq 4$, there are edges between the internal vertices of $P_b$ and $P_c$.%
\end{claimplus}

We are finally ready to find an induced subgraph in $G$ isomorphic to one of the
configurations in Figure~\ref{fig:forb}.  To start, note that $a,b,c,d$ are
pairwise non-adjacent, since they form a blocking quadruple.  This implies that
the paths $P_b$, $P_c$, $P_d$ have each at least three vertices.

Let $u$ denote the neighbour of $b$ on $P_b$, let $v$ denote the neighbour of
$c$ on $P_c$, and let $w$ denote the neighbour of $d$ on $P_d$.  Clearly,
$a\not\in\{u,v,w\}$.

Suppose first that one of the three paths contains at least four vertices.
Without loss of generality (by the symmetry between the three paths), assume
that $|P_b|\geq 4$.  This implies that $ua\not\in E(G)$, since $P_b$ is an
induced path.  Further, by \ref{clm:9}, note that $u$ has a neighbour on
$P_c\cup P_d$.  Without loss of generality (again by symmetry), assume that $u$
has a neighbour on $P_c$. 

This implies, by \ref{clm:6}, that $|P_c|=3$. Namely, we have that
$P_c=\{a,v,c\}$.  Moreover, by \ref{clm:5}, there are no edges between the
internal vertices of $P_b$ and $P_d$, since $|P_b|\geq 4$  and $u\in P_b$ has a
neighbour on $P_c$.  In addition, recall that $c$ misses the path $P_b$, that
$u\in P_b$, and that $ua\not\in E(G)$.  Thus, since $u$ has a neighbour on
$P_c$, we conclude that $uv\in E(G)$. This implies, by \ref{clm:7}, that $v$ is
adjacent to all internal vertices on $P_b$.

Suppose that $vw\not\in E(G)$.  If $|P_d|\geq 4$, then $wc,wa\not\in E(G)$,
since $c$ misses $P_d$ and since $P_d$ is an induced path. Thus $w$ has no
neighbours on $P_c$, and also no neighbours on $P_b$, since $b$ misses $P_d$ and
there are no edges between the internal vertices of $P_b$ and $P_d$. This
contradicts \ref{clm:9}.  Therefore $|P_d|=3$ in which case $v$, $w$ are the
only internal vertices of $P_c$ and $P_d$, respectively. But we assume
$vw\not\in E(G)$, contradicting~\ref{clm:10}.

We must therefore conclude that $vw\in E(G)$. So, by \ref{clm:7}, the vertex $v$
is adjacent to all internal vertices of $P_d$.  Recall that $v$ is also adjacent
to all internal vertices of $P_b$, and there are no edges between the internal
vertices of $P_b$ and $P_d$.  Thus, since the paths $P_b$, $P_c$, $P_d$ are
induced, we conclude that the union $P_b\cup P_c\cup P_d$ of the three paths
induces in $G$ the graph d) in Figure~\ref{fig:forb}.  This completes the case
when one of the three paths has four or more vertices.

It remains to discuss the case when each of the three paths $P_b$, $P_c$, $P_d$
has exactly three vertices. Namely, we have $P_b=\{a,u,b\}$, $P_c=\{a,v,c\}$,
and $P_d=\{a,w,d\}$.  In this case, we show that their union $P_b\cup P_c\cup
P_d$ induces in $G$ one of the graphs d)-g) in Figure~\ref{fig:forb}. In
particular, if $uv,uw,vw\not\in E(G)$, then the paths induce the graph e) in
Figure~\ref{fig:forb}, while if $uv,uw,vw\in E(G)$, the paths induce the graph
g) in Figure~\ref{fig:forb}.  Similarly, if exactly one of $uv,uw,vw$ is in
$E(G)$, then the paths induce the graph f) in Figure~\ref{fig:forb}, while if
exactly two of $uv,uw,vw$ are in $E(G)$, the paths induce the graph d) in Figure
\ref{fig:forb} where the path labelled ``$\geq 2$ edges'' has exactly 2 edges. 
This exhausts all possibilities and concludes the proof of Lemma~\ref{lem:forb2}.
\end{proof}

Finally, to prove Theorem~\ref{thm:bq-nearly}, suppose that $G$ is a nearly
chordal graph. If $G$ contains a subgraph isomorphic to one of the graphs in
Figure~\ref{fig:forb}, then $G$ contains a blocking quadruple, as each of the
graphs in Figure~\ref{fig:forb} contains one on the vertices labelled as $a,b,c,d$.
Conversely, if $G$ does not contain as an induced subgraph any of the graphs in
Figure~\ref{fig:forb}, then $G$ is a nearly interval graph by
Lemma~\ref{lem:forb1}. So $G$ contains no blocking quadruple by
Lemma~\ref{lem:forb2}.  This proves the equivalence of (i) and (ii) of
Theorem~\ref{thm:bq-nearly}.

\section{Proof of Theorem~\ref{thm:main}}

Before the proof itself, we need to introduce some useful notions. We shall
describe a particular general construction of circular arcs, corresponding to
the vertices of a given chordal graph $G$, which will be based on a clique tree
$T$ of $G$ and its planar drawing. Intuitively, this operation will correspond
to cutting out the drawing of $T$ from the plane and stretching the resulting
hole to a disc. The cliques of $T$ will appear, possibly multiple times, in the
cyclic sequence of cliques on the boundary of this disc. We observe that the
same sequence of cliques can be obtained by a suitable depth-first traversal of
$T$ or an Euler tour of $T$ (when considered as a symmetric digraph).  For
convenience, we shall use the latter.

We then use this cyclic sequence to generate a family of circular-arcs.
The construction will produce circular arcs for all vertices of $G$, but we will
not be able to guarantee that the intersections of these arcs correspond to the
edges in $G$.  To ensure this, we introduce a set of conditions that will 
suffice to imply that the intersection graph of the arcs will indeed be $G$.

Finally, to prove Theorem~\ref{thm:main}, we will explain how to choose a clique
tree of $G$ and an appropriate planar drawing of it (i.e., an appropriate Euler
tour) so that these conditions are fulfilled.  

\subsection{Preliminaries}
Let $G$ be a chordal graph. A {\em clique} of $G$ is a set of pairwise adjacent
vertices. A {\em maximal clique} of $G$ is a clique that is not contained in a
larger clique of $G$.

A {\em clique tree} of $G$ is a tree $T$ whose vertices are the maximal cliques
of $G$ such that for all pairs of maximal cliques $C,C'$ of $G$, if $C''$ is a
maximal clique on the path of $T$ between $C$ and $C'$, then $C''\supseteq C\cap
C'$. Every chordal graph has a clique tree, and a graph is chordal if and only
if it has a clique tree \cite{gavril74}.  Note that a chordal graph can have
multiple clique trees.

We shall assume that $G$ contains {\em no universal vertex} (vertex adjacent
to all other vertices of $G$). This does not change any of the subsequent
arguments as a circular-arc graph remains circular-arc on addition of universal
vertices.

A {\em clique cover} of $G$ is a collection of maximal cliques of $G$ such that
every vertex in $G$ belongs to at least one of the cliques in the clique cover.

The following lemmas are simple consequences of the respective definitions. We
leave the details of their proofs to the reader.

\begin{lemma}\label{obs:leaf}
If $T$ is a clique tree of $G$, and $C$ is a leaf of $T$,  then there exists a
vertex $u\in V(G)$ such that $u$ is in $C$ and in no other clique in $V(T)$.
\end{lemma}

\begin{lemma}\label{obs:leaves}
If ${\cal Q}$ is a clique cover of $G$, and $T$ is a clique tree of $G$, then every
leaf of $T$ is in $\mathcal{Q}$.
\end{lemma}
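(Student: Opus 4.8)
The plan is to combine Lemma~\ref{obs:leaf} with the defining covering property of $\mathcal{Q}$; the whole argument reduces to tracking a single cleverly chosen vertex. First I would fix an arbitrary leaf $C$ of $T$ and invoke Lemma~\ref{obs:leaf} to obtain a vertex $u\in V(G)$ such that $u\in C$ and $u$ lies in no other vertex of $T$ (that is, in no other maximal clique of $G$). This vertex $u$ is the pivot of the entire proof, since it is ``private'' to the leaf $C$.

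Next I would use the fact that $\mathcal{Q}$ is a clique cover: every vertex of $G$, and in particular $u$, belongs to at least one member of $\mathcal{Q}$. So there is some $Q\in\mathcal{Q}$ with $u\in Q$. By definition a clique cover consists of maximal cliques of $G$, and the vertices of the clique tree $T$ are precisely the maximal cliques of $G$; hence $Q$ is itself a vertex of $T$.

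Finally I would close the loop. We have $u\in Q$ and $Q\in V(T)$, yet Lemma~\ref{obs:leaf} tells us that the only vertex of $T$ containing $u$ is $C$. Therefore $Q=C$, and since $Q\in\mathcal{Q}$ this gives $C\in\mathcal{Q}$. As $C$ was an arbitrary leaf of $T$, every leaf of $T$ lies in $\mathcal{Q}$.

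I do not expect any real obstacle here: once Lemma~\ref{obs:leaf} is available, the statement is essentially a one-line consequence of the covering condition. The only point requiring a moment's care is that the private vertex $u$ must be unique to $C$ among \emph{all} vertices of $T$, not merely among the neighbours of $C$ in $T$ — but this is exactly the content of Lemma~\ref{obs:leaf}, so no additional work is needed.
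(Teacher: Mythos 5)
Your argument is correct and is exactly the intended one: the paper leaves this lemma to the reader as a ``simple consequence of the definitions,'' and your route via Lemma~\ref{obs:leaf} --- taking the private vertex $u$ of the leaf $C$, noting it must be covered by some maximal clique $Q\in\mathcal{Q}\subseteq V(T)$, and concluding $Q=C$ --- is the standard way to fill in that detail. No gaps.
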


\subsection{Euler tour}

Let $T$ be a clique tree of $G$.  Note that $T$ can be considered as a symmetric
directed graph (where each edge is replaced by a pair of directed arcs with
opposite directions). Then each vertex in $T$ has its in-degree equal to its
out-degree. Thus if considered as such, $T$ is a connected Eulerian digraph, and
hence, has an {\em Euler tour} (a walk visiting all edges of $T$).

We shall write ${\cal A}=A_0,A_1,\ldots,A_{k-1},A_0$ for an Euler tour of
$T$ visiting nodes $A_0,A_1,\ldots,A_{k-1},A_0$ in this order.  Note that nodes
of $T$ may appear multiple times in ${\cal A}$ as different $A_i$s.
Nonetheless, every leaf of $T$ appears exactly once in ${\cal A}$.  For
$i,j\in\{0,\ldots,k-1\}$, we write ${\cal A}[i,j]$ to denote the circular
subsequence $A_i, A_{i+1},\ldots, A_j$ (indices mod $k$) of ${\cal A}$.  In
other words,

\hfill${\cal A}[i,j]=\left\{\begin{array}{l@{\quad}l}
A_i,A_{i+1},\ldots,A_j & \mbox{if $i\leq j$}\\ A_i,\ldots,A_{k-1},A_0,\ldots,A_j
& \mbox{if $i> j$} \end{array}\right.$\hfill\null

\noindent We write ${\cal A}(i,j)$ for ${\cal A}[i,j]\setminus\{A_i,A_j\}$.

Our main tool in proving Theorem~\ref{thm:main} is the following lemma.

\begin{lemma}\label{lem:main}
Let $T$ be a clique tree of $G$, and let 
${\cal A}=A_0,A_1,\ldots,A_{k-1},A_0$ be an Euler tour of $T$.
If there exists a mapping $\phi:V(G)\rightarrow \{0,\ldots,k-1\}$ such that\smallskip
\begin{compactitem}[~~($\star$0)]
\item $u\in A_{\phi(u)}$ for each $u\in V(G)$,\setreflabel{($\star$0)}\label{enum:f0}
\end{compactitem}
\smallskip
and such that each $uv\in E(G)$ satisfies at least one of the following conditions:
\smallskip
\begin{compactenum}[~~($\star$1)]
\item $ux\in E(G)$ for every $x\in V(G)$ such that $A_{\phi(x)}\in {\cal A}\big(\phi(u),\phi(v)\big)$,\setreflabel{($\star$\theenumi)}\label{enum:f1}
\item $vx\in E(G)$ for every $x\in V(G)$ such that $A_{\phi(x)}\in {\cal A}\big(\phi(u),\phi(v)\big)$,\setreflabel{($\star$\theenumi)}\label{enum:f2}
\item $ux\in E(G)$ for every $x\in V(G)$ such that $A_{\phi(x)}\in {\cal A}\big(\phi(v),\phi(u)\big)$,\setreflabel{($\star$\theenumi)}\label{enum:f3}
\item $vx\in E(G)$ for every $x\in V(G)$ such that $A_{\phi(x)}\in {\cal A}\big(\phi(v),\phi(u)\big)$,\setreflabel{($\star$\theenumi)}\label{enum:f4}
\end{compactenum}
\smallskip
\noindent then $G$ is a circular-arc graph.
\end{lemma}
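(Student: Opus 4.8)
The plan is to construct an explicit arc for each vertex of $G$ along a circle subdivided into $k$ slots corresponding to the Euler tour positions $A_0,A_1,\ldots,A_{k-1}$, and then verify that the intersection graph of these arcs is exactly $G$. Concretely, I would place $k$ equally spaced points $p_0,p_1,\ldots,p_{k-1}$ on a circle in cyclic order, with point $p_i$ associated to the tour node $A_i$. For each vertex $u\in V(G)$, I would define the arc $\alpha_u$ to be the shortest closed arc containing every point $p_i$ with $u\in A_i$; equivalently, $\alpha_u$ is the arc spanning all tour positions whose clique contains $u$, and by property \ref{enum:f0} this set is nonempty (it contains $p_{\phi(u)}$). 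The key structural fact making this well-defined as a proper arc is that the positions containing a fixed vertex $u$ form a cyclically \emph{contiguous} block: since $T$ is a clique tree, the maximal cliques containing $u$ induce a subtree of $T$, and an Euler tour visits the arcs of any subtree as a contiguous cyclic interval. I would establish this contiguity as the first step, so that each $\alpha_u$ is a genuine arc rather than a union of disconnected pieces.

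Next I would prove the forward direction, that adjacency in $G$ forces arc intersection. If $uv\in E(G)$ then $u$ and $v$ lie together in some maximal clique, hence in some tour node $A_i$, so $p_i\in\alpha_u\cap\alpha_v$ and the arcs meet. This direction is immediate and uses only that adjacent vertices share a maximal clique in a chordal graph together with the contiguity just established.

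The substantive direction, and the one I expect to be the main obstacle, is the converse: if $\alpha_u\cap\alpha_v\neq\emptyset$ then $uv\in E(G)$. This is exactly what conditions \ref{enum:f1}--\ref{enum:f4} are engineered to guarantee, and it is where the combinatorial content lives. Suppose $\alpha_u$ and $\alpha_v$ overlap but $uv\notin E(G)$. Since neither arc contains the other's defining endpoints in a way that shares a tour node (no common clique, as $u,v$ nonadjacent), the overlap must occur because one arc's endpoint region falls strictly inside the other arc. I would argue that an overlap between two arcs whose vertices are nonadjacent produces, for some edge $uv'$ or $u'v$ along the boundaries, a witness vertex $x$ lying in one of the open tour intervals ${\cal A}(\phi(u),\phi(v))$ or ${\cal A}(\phi(v),\phi(u))$ that violates all four of \ref{enum:f1}--\ref{enum:f4} simultaneously. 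The heart of the argument is a careful case analysis on how the two contiguous blocks of $u$ and $v$ can interleave around the circle: because they overlap but share no slot, one block must begin inside the other, and the vertex defining that boundary crossing serves as the $x$ that the hypotheses forbid. I expect the bookkeeping of which of the two open arcs ${\cal A}(\phi(u),\phi(v))$ versus ${\cal A}(\phi(v),\phi(u))$ the witness lands in, and matching that to the correct one of the four conditions, to be the delicate part; the symmetry among the four conditions should let me reduce to a single representative configuration and dispatch the rest by relabeling $u\leftrightarrow v$ and reversing the tour orientation.

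Once both directions are in hand, the arcs $\{\alpha_u : u\in V(G)\}$ form a circular-arc representation of $G$, so $G$ is a circular-arc graph, completing the proof.
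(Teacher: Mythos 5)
There is a genuine gap, and it is located in the very first step: the claim that the tour positions whose cliques contain a fixed vertex $u$ form a cyclically contiguous block is false. The cliques containing $u$ do induce a subtree $T_u$ of $T$, but an Euler tour of $T$ can enter and leave $T_u$ several times (e.g.\ a star whose centre and two non-consecutive leaves contain $u$), so the occurrences of $u$ are in general a union of several disjoint cyclic runs. This is not a corner case one can wave away: in the paper's own worked example (Figure~\ref{fig:ex2}b, with the \emph{correct} $\phi$), the positions containing $h$ split into four separate runs, and your ``shortest closed arc containing every $p_i$ with $u\in A_i$'' for $h$ would sweep over position $16$, whose clique $\{i,n\}$ does not contain $h$; since $n$ lives only there, your arcs for $h$ and $n$ would intersect although $hn\notin E(G)$. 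Note also that your arcs do not depend on $\phi$ at all, whereas the lemma's conclusion hinges on the existence of a suitable $\phi$ --- a strong sign the construction cannot be right. The intended construction anchors the arc of $u$ at position $\phi(u)$ and extends it only through the \emph{single maximal contiguous run} of positions containing $u$ around $\phi(u)$; the conditions ($\star$0)--($\star$4) are precisely what guarantees that, for each edge $uv$, the two selected runs still meet.

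A consequence of the false contiguity claim is that you have the two directions of the verification backwards. With the correct (anchored, single-run) arcs, every point of ${\cal S}_u$ lies in a clique containing $u$, so ``arcs intersect $\Rightarrow$ edge'' is the trivial direction (a shared point gives a shared clique). The substantive direction is ``edge $\Rightarrow$ arcs intersect'': since $u$'s occurrences may be scattered, one must show that the run chosen for $u$ and the run chosen for $v$ overlap, and this is where ($\star$1)--($\star$4) enter, via the structural fact that any maximal gap in $u$'s occurrences along the tour contains a leaf of $T$, hence (by Lemma~\ref{obs:leaf}) a private vertex $x$ with $A_{\phi(x)}$ in that gap, which one of the conditions then forbids. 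Your sketch of the converse direction (``a witness $x$ along the boundaries violating all four conditions'') does not engage with this mechanism, because the conditions constrain only pairs $uv$ that \emph{are} edges and so cannot directly rule out a spurious intersection of arcs of non-adjacent vertices. To repair the proof you would need to redefine the arcs as above and then prove the leaf-in-every-gap claim.
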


\begin{proof}
In the subsequent text, all the subscript arithmetic is considered reduced modulo $k$.

We start with the following useful observation.

\begin{claim}\label{clm:break}
Let $u\in V(G)$. If $u$ is in $A_i$ and $A_j$ where
${\cal A}(i,j)\neq\emptyset$, and $u$ does not belong to any clique in ${\cal A}(i,j)$,~then

\begin{compactenum}[\quad(i)]
\item $A_i=A_j$,\setreflabel{(\roman{enumi})}\label{subclm:ieqj}
\item the path in $T$ from any clique in ${\cal A}[j,i]$ to any clique in
${\cal A}(i,j)$ passes through $A_i$, and
\setreflabel{(\roman{enumi})}\label{subclm:path}
\item at least one of the cliques in ${\cal A}(i,j)$ is a leaf of $T$.
\setreflabel{(\roman{enumi})}\label{subclm:leaf}
\end{compactenum}
\end{claim}

\begin{proofclaim}
Suppose $A_i\neq A_j$. Let $e$ be the edge of $T$ between $A_i$ and
$A_{i+1}$.  Note that $A_{i+1}\neq A_j$ because ${\cal A}(i,j)\neq\emptyset$ and
because no clique in ${\cal A}(i,j)$ contains $u$ while $A_j$ does.
Further note that, since $u$ is also in $A_i$, it is not possible that the
edge $e$ is again traversed (in the reverse direction) somewhere between
$A_{i+1}$ and $A_j$. This implies that $A_{i+1}$ lies on the path of $T$
between $A_i$ and $A_j$. However, this is impossible as $T$ is a clique
tree, and both $A_i$ and $A_j$ contain $u$ while $A_{i+1}$ does not. This
proves \ref{subclm:ieqj}.

For \ref{subclm:path}, we assume \ref{subclm:ieqj}, i.e., that $A_i=A_j$, and
first we show that also $A_{i+1}=A_{j-1}$. Let $\cal M$ denote the set of all
the cliques that appear in ${\cal A}(i,j)$.  As ${\cal A}(i,j)$ is a walk that
goes through all the cliques in $\cal M$ and through no other clique, $\cal M$
induces a connected subgraph in $T$. Note that $A_i=A_j$ does not belong to
$\cal M$ as it contains $u$ while no clique in $\cal M$ does, by our assumption.
If $A_{i+1}\neq A_{j-1}$, then the path in $\cal M$ between $A_{i+1}$ and
$A_{j-1}$ together with the edges $A_iA_{i+1}$ and $A_iA_{j-1}$ form a cycle in
$T$ (recall that $A_i=A_j$ and $A_i\not\in {\cal M}$), which is a contradiction.
Therefore, $A_{i+1}=A_{j-1}$.  Now, let $e$ again be the edge of $T$ between
$A_i=A_j$ and $A_{i+1}=A_{j-1}$.  The removal of the edge $e$ from $T$ results
in two trees; let us denote them $T_1$ and $T_2$, and by symmetry, assume that
$A_i\in V(T_1)$ while $A_{i+1}\in V(T_2)$. As $e$ is traversed in ${\cal A}$
exactly twice, once from $A_i$ to $A_{i+1}$ and once from $A_{j-1}$ to $A_j$, we
conclude that the vertex set of $T_1$ consists precisely of the cliques
appearing in ${\cal A}[j,i]$ while the vertices of $T_2$ are the cliques
appearing in ${\cal A}(i,j)$.  Note that, as $T$ is a tree, any path of $T$
between a vertex in $T_1$ and a vertex in $T_2$ must pass through the edge $e$.
Therefore, so does any path of $T$ from a clique in ${\cal A}[j,i]$ to a clique
in ${\cal A}(i,j)$. Any such path contains $A_i$, which~proves~\ref{subclm:path}.

Finally, for \ref{subclm:leaf}, recall $T_2$ from the previous paragraph, and
note that the vertex set of $T_2$ consists of all cliques appearing in ${\cal
A}(i,j)$.  Thus, to prove \ref{subclm:leaf}, it suffices to show that some
vertex of $T_2$ is a leaf of $T$.  If $T_2$ contains only one vertex, then this
vertex is itself a leaf of $T$, and we are done. Otherwise, $T_2$ has at least
two vertices, and so it has at least two leaves, at most one of which is
incident to the edge $e$.  Consequently any other leaf of $T_2$ is also a leaf
of $T$, which implies \ref{subclm:leaf}.
\end{proofclaim}

Now, let $\phi:V(G)\rightarrow \{0,\ldots,k-1\}$ be a mapping satisfying the
conditions of the lemma. Define circular arcs $\{{\cal S}_u\}_{u\in V(G)}$
for the vertices of $G$ as follows. 
Let $\lambda_0,\lambda_1,\ldots,\lambda_{k-1}$ be $k$ distinct points on the
circle, arranged in this clockwise order. For each vertex $u\in V(G)$, define
indices $\ell_u$ and $r_u$ such that\smallskip

\begin{compactitem}[~~--~]
\item $A_{\ell_u}$ is the first clique that does not contain $u$ in\\
\null\hfill$A_{\phi(u)-1}$, $A_{\phi(u)-2}$, \ldots, $A_0$, $A_{k-1}$,
$A_{k-2}$, \ldots, $A_{\phi(u)+1}$\hfill\null
\item $A_{r_u}$ is the first clique that does not contain $u$ in\\
\null\hfill$A_{\phi(u)+1}$, $A_{\phi(u)+2}$, \ldots, $A_{k-1}$, $A_0$, $A_1$,
\ldots, $A_{\phi(u)-1}$\hfill\null
\end{compactitem}\smallskip
Then define $\mathcal{S}_u$ to be the clockwise circular arc from
$\lambda_{\ell_u+1}$ to $\lambda_{r_u-1}$.
\medskip

We show that the arcs $\{{\cal S}_u\}_{u\in V(G)}$ constitute a circular-arc
representation of $G$. Namely, we prove that for all vertices $u,v\in V(G)$, the
arc ${\cal S}_u$ intersects the arc ${\cal S}_v$ if and only if $uv\in E(G)$.
This will prove the lemma.
We shall need the following property which can be deduced directly from the construction.
\begin{claim}\label{clm:f2}
Each $u\in V(G)$ satisfies $\lambda_{\phi(u)}\in{\cal S}_u$, and $u\in A_i$ whenever
$\lambda_i\in{\cal S}_u$.
\end{claim}

\begin{claim}\label{clm:f3}
If ${\cal S}_u\cap {\cal S}_v\neq\emptyset$ then $uv\in E(G)$.
\end{claim}

\begin{proofclaim}
Since ${\cal S}_u\cap {\cal S}_v\neq\emptyset$, let $i$ be such that $\lambda_i\in
{\cal S}_u\cap {\cal S}_v$.  This implies that $u,v\in A_i$ by \ref{clm:f2}.
Therefore $uv\in E(G)$, since $A_i$ is a clique of $G$.
\end{proofclaim}

\begin{claim}
If $uv\in E(G)$, then ${\cal S}_u\cap {\cal S}_v\neq\emptyset$.
\end{claim}

\begin{proofclaim}
For contradiction, assume ${\cal S}_u\cap{\cal S}_v=\emptyset$.  If
$\phi(u)=\phi(v)$, then $\lambda_{\phi(u)}\in{\cal S}_u\cap{\cal S}_v$ by
\ref{clm:f2}, but this contradicts our assumption that ${\cal S}_u\cap {\cal
S}_v=\emptyset$.  Thus $\phi(u)\neq\phi(v)$.

By the assumptions of the lemma, one of the conditions
\ref{enum:f1}-\ref{enum:f4} is satisfied for the edge $uv$.  Notice that the
four conditions are the same up to reversal of ${\cal A}$ and exchanging $u$ and
$v$.  Indeed, it can be easily seen that reversing ${\cal A}$ will yield the
same collection of circular-arcs up to some rotational symmetries. Therefore, we
shall assume without loss of generality that \ref{enum:f1} holds.

We claim that the cliques containing $u$ form a consecutive sequence in
${\cal A}[\phi(u),\phi(v)]$. Suppose not. Let $A_i$ be the first clique in
${\cal A}[\phi(u),\phi(v)]$ such that $u\not\in A_i$ and let $A_j$ be the first
clique in ${\cal A}[i,\phi(v)]$ that~contains~$u$.  By
\ref{clm:break}\ref{subclm:leaf}, this means that there is some clique
$Q$ in ${\cal A}(i-1,j)$ that is a leaf of $T$ and does not contain~$u$.  As
$Q$ is a leaf of $T$, by Lemma~\ref{obs:leaf}, there is a vertex $x$ that is in
$Q$ but in no other clique in $V(T)$. But as $u\not\in Q$, we have $ux\not\in
E(G)$. Also, as $Q$ is a leaf of $T$, it occurs exactly once in ${\cal A}$ and
this occurrence is $A_{\phi(x)}$ by \ref{enum:f0}. So we have a vertex
$x$ such that $A_{\phi(x)}\in{\cal A}(\phi(u),\phi(v))$ and $ux\not\in E(G)$,
which contradicts \ref{enum:f1}. Therefore, the cliques containing $u$ indeed
form a consecutive sequence in ${\cal A}[\phi(u),\phi(v)]$. In other words,
for every $A_i \in{\cal A}[\phi(u),\phi(v)]$ such that $u\in A_i$, each clique in
${\cal A}[\phi(u),i]$ also contains $u$. 

By the construction of ${\cal S}_u$, this implies the following.

\begin{claimaux}{$+$}\label{clm:plus}
each $A_i \in{\cal A}[\phi(u),\phi(v)]$ such that $u\in A_i$ satisfies $\lambda_i\in {\cal S}_u$.
\end{claimaux}

\noindent Now, consider the path $P$ in $T$ from $A_{\phi(u)}$ to $A_{\phi(v)}$.
Note that $u\in A_{\phi(u)}$ and $v\in A_{\phi(v)}$ by \ref{enum:f0}.  Thus
since $uv\in E(G)$ and since $T$ is a clique tree, there is some clique $Q$ on
$P$ that contains both $u$ and $v$.  We claim that there is an occurrence of
$Q$ in ${\cal A}[\phi(u),\phi(v)]$. This is trivially true if $Q=A_{\phi(u)}$
or $Q=A_{\phi(v)}$. Otherwise, note that removing $Q$ from $T$ results in a
forest in which $A_{\phi(u)}$ and $A_{\phi(v)}$ are in different components.
Therefore, any walk in $T$ from $A_{\phi(u)}$ to $A_{\phi(v)}$ has to pass
through $Q$. This implies that there is at least one occurrence of $Q$ in
${\cal A}[\phi(u),\phi(v)]$, which is a walk from $A_{\phi(u)}$ to $A_{\phi(v)}$
in $T$.  Let $A_s$ be one of these occurrences. 

Note that $u,v\in A_s=Q$.
As $u\in A_s$ and since $A_s\in{\cal A}[\phi(u),\phi(v)]$, we conclude that
$\lambda_s\in{\cal S}_u$ by \ref{clm:plus}. This implies that $\lambda_s
\not\in{\cal S}_v$, since ${\cal S}_u\cap{\cal S}_v=\emptyset$. Therefore, there
exists some clique in ${\cal A}[s,\phi(v)]$ that does not contain $v$.  Let
$A_j$ be the last such clique in ${\cal A}[s,\phi(v)]$ and let $A_i$ be the last
clique in ${\cal A}[s,j]$ that contains $v$. Recall that $v\in A_{\phi(v)}$ by
\ref{enum:f0}. Thus the definition of $j$ implies that $v\in A_{j+1}$ and $v\in
A_i$ but $v$ is in no clique in ${\cal A}(i,j+1)$. By
\ref{clm:break}\ref{subclm:ieqj}, we now have $A_i=A_{j+1}$. Also by
\ref{clm:break}\ref{subclm:leaf}, there exists $Q\in {\cal A}(i,j+1)$ that is a
leaf of $T$. Using Lemma~\ref{obs:leaf}, let $x$ be a vertex that is present in
$Q$ and in no other clique in $V(T)$. Since $x$ appears in $A_{\phi(x)}$ by
\ref{enum:f0}, this implies that $A_{\phi(x)}=Q$. Thus, we conclude that
$A_{\phi(x)}\in{\cal A}(i,j+1)$, and hence, $A_{\phi(x)}\in{\cal
A}(\phi(u),\phi(v))$, which yields $ux\in E(G)$ by \ref{enum:f1}.  From this we
deduce that $u\in Q$, since $Q$ is the only clique that contains $x$. Now,
\ref{clm:break}\ref{subclm:path} tells us that the path in $T$ between any
clique in ${\cal A}[j+1,i]$ and any clique in ${\cal A}(i,j+1)$ passes through
the clique $A_i=A_{j+1}$. As $u\in Q$ and $u\in A_{\phi(u)}$ by \ref{enum:f0},
the vertex $u$ is present in every clique on the path in $T$ between
$A_{\phi(u)}$ and $Q$. This implies that $u\in A_{j+1}$. Hence,
$\lambda_{j+1}\in{\cal S}_u$ by \ref{clm:plus}. For this recall that
$A_{j+1}\in{\cal A}[\phi(u),\phi(v)]$. Now, since $A_j$ is the last clique in
${\cal A}[s,\phi(v)]$ that does not contain $v$, we also have
$\lambda_{j+1}\in{\cal S}_v$ by the construction of ${\cal S}_v$. Thus we obtain
${\cal S}_u\cap{\cal S}_v \neq\emptyset$, which is a contradiction.
\end{proofclaim}

This proves that $\{{\cal S}_u\}_{u\in V(G)}$ is indeed a circular-arc
representation of $G$. Thus $G$ is a circular-arc graph as claimed.
That concludes the proof.
\end{proof}

As an illustration of the construction provided by this lemma, let us point the
reader to Figure~\ref{fig:ex2}.  In part a), we see a chordal graph, its clique
tree, and an Euler tour of this clique tree (indicated by the labels 0, 1,
\ldots, 19). Part b) illustrates a choice of $\phi$ that satisfies the
conditions of the lemma, and also shows the resulting circular-arc
representation of the graph. Part c) shows a different mapping $\phi$ that fails
the conditions and similarly shows corresponding circular arcs that fail to
correctly represent the graph.

\begin{figure}[t!]
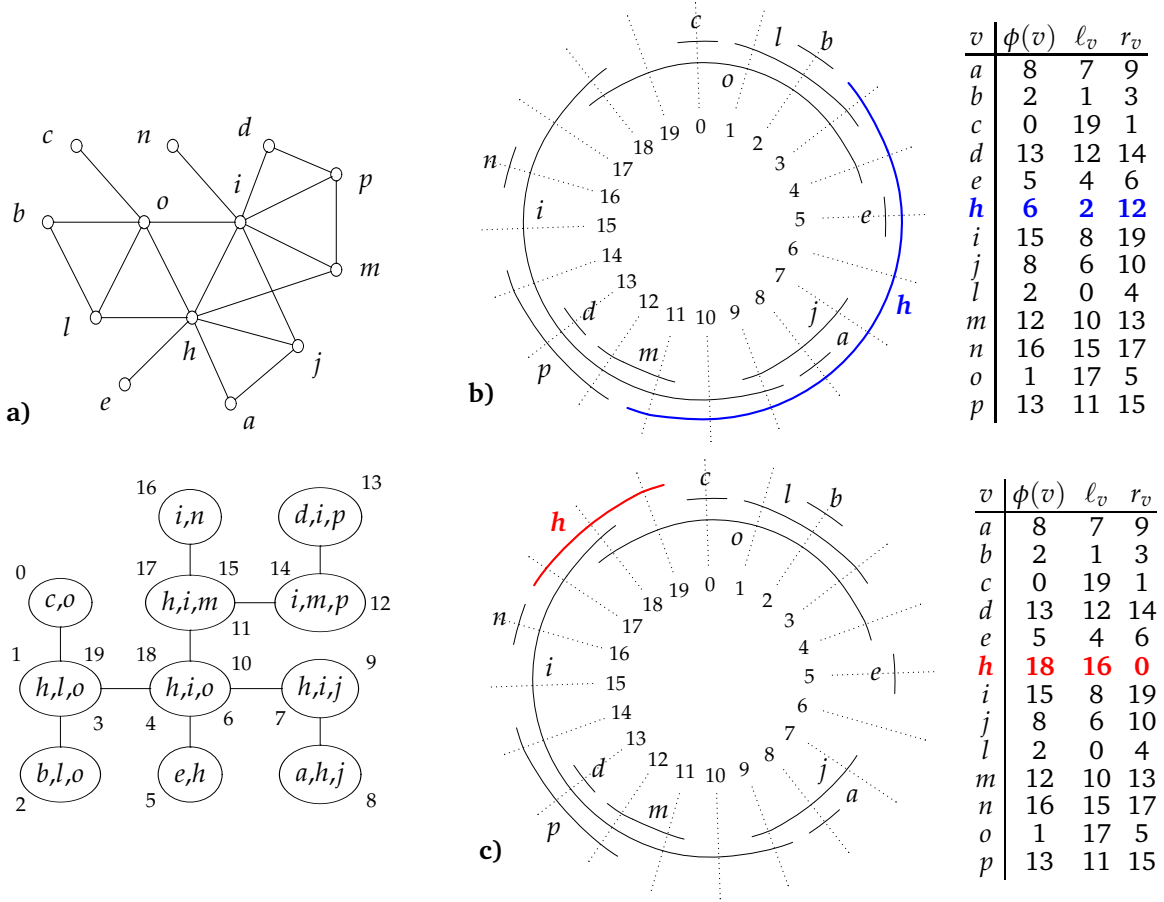

\centering
$\xy/r3pc/:(0,0)*[o][F]{\phantom{s}}="b";(1,0)*[o][F]{\phantom{s}}="o";
(2,0)*[o][F]{\phantom{s}}="i";(3,0.5)*[o][F]{\phantom{s}}="p";
(3,-0.5)*[o][F]{\phantom{s}}="m";(0.5,-1)*[o][F]{\phantom{s}}="l";
(1.5,-1)*[o][F]{\phantom{s}}="h";(0.8,-1.7)*[o][F]{\phantom{s}}="e";
(1.9,-1.9)*[o][F]{\phantom{s}}="a";(2.6,-1.3)*[o][F]{\phantom{s}}="j";
(0.3,0.8)*[o][F]{\phantom{s}}="c";(1.3,0.8)*[o][F]{\phantom{s}}="n";
(2.3,0.8)*[o][F]{\phantom{s}}="d";
{\ar@{-} "b";"o"};{\ar@{-} "b";"l"};{\ar@{-} "o";"l"};{\ar@{-} "o";"c"};
{\ar@{-} "o";"h"};{\ar@{-} "o";"i"};{\ar@{-} "i";"n"};{\ar@{-} "i";"d"};
{\ar@{-} "i";"p"};{\ar@{-} "i";"m"};{\ar@{-} "i";"h"};{\ar@{-} "i";"j"};
{\ar@{-} "m";"p"};{\ar@{-} "m";"h"};{\ar@{-} "p";"d"};{\ar@{-} "l";"h"};
{\ar@{-} "h";"e"};{\ar@{-} "h";"a"};{\ar@{-} "h";"j"};{\ar@{-} "a";"j"};
"a"+(0.2,-0.2)*{a};"b"+(-0.3,0.1)*{b};"c"+(-0.3,0.1)*{c};
"d"+(-0.25,0.2)*{d};"e"+(-0.2,-0.2)*{e};"h"+(-0.025,-0.35)*{h};
"i"+(-0.025,0.4)*{i};"j"+(0.2,-0.2)*{j};"l"+(-0.3,-0.1)*{l};
"m"+(0.35,0)*{m};"n"+(-0.3,0.1)*{n};"o"+(0.2,0.2)*{o};"p"+(0.3,-0.1)*{p};
(-0.3,-2)*{\mbox{\bf a)}};
\endxy$
\qquad\quad
{$\xy/r3pc/:
{{\xypolygon60"A"{~:{(1.9,0):}~>{}~={80}{}}},{\xypolygon60"B"{~:{(2.1,0):}~>{}~={80}{}}},
{\xypolygon60"C"{~:{(2.3,0):}~>{}~={80}{}}},{\xypolygon60"D"{~:{(1.5,0):}~>{}~={80}{}}},
{\xypolygon60"E"{~:{(1.7,0):}~>{}~={80}{}}},{\xypolygon60"F"{~:{(1.2,0):}~>{}~={80}{}}},
{\xypolygon60"Y"{~:{(1.0,0):}~>{}~={80}{}}}};
{"A40";"A38"**\crv{"A39"&}}; {"B58";"B56"**\crv{"B57"&}}; {"A4";"A2"**\crv{"A3"&}};
{"E25";"E23"**\crv{"E24"&}}; {"A49";"A47"**\crv{"A48"&}};
{"B55";"B29"**[blue][thicker]\crv{"B54"&"B51"&"B48"&"B45"&"B42"&"B39"&"B36"&"B33"&"B30"&}};
{"A37";"A8"**\crv{"A36"&"A33"&"A30"&"A27"&"A24"&"A21"&"A18"&"A15"&"A12"&"A9"}};
{"E43";"E35"**\crv{"E42"&"E39"&"E36"&}}; {"A1";"A53"**\crv{"A60"&"A57"&"A54"&}};
{"E31";"E26"**\crv{"E30"&"E27"&}}; {"B16";"B14"**\crv{"B15"&}};
{"E10";"E50"**\crv{"E9"&"E6"&"E3"&"E60"&"E57"&"E54"&"E51"&}};
{"B28";"B20"**\crv{"B27"&"B24"&"B21"&}};
"Y60"*{{_{\,1}}};"Y57"*{{_{\,2}}};"Y54"*{{_{\,3}}};"Y51"*{{_{\,4}}};"Y48"*{{_{\,5}}};"Y45"*{{_{6\,}}};
"Y42"*{{_{7\,}}};"Y39"*{{_{8\,}}};"Y36"*{{_{9}}};"Y33"*{{_{10}}};"Y30"*{{_{11}}};"Y27"*{{_{12}}};
"Y24"*{{_{13}}};"Y21"*{{_{14}}};"Y18"*{{_{15}}};"Y15"*{{_{16}}};"Y12"*{{_{17}}};"Y9"*{{_{18}}};
"Y6"*{{_{19}}};"Y3"*{{_{\,0}}};
"B3"*{c};"D1"*{o};"B59"*{l};"C57"*{b};"E48"*{e};"C44"*{\mbox{\color{blue}\bf\textit{h~}}};"A41"*{a};"D41"*{j};"D29"*{m};
"D24"*{d};"C25"*{p};"C15"*{n};"E17"*{i};
{\ar@{.} "F60";"C60"};{\ar@{.} "F57";"C57"};{\ar@{.} "F54";"C54"};{\ar@{.} "F51";"C51"};
{\ar@{.} "F48";"C48"};{\ar@{.} "F45";"C45"};{\ar@{.} "F42";"C42"};{\ar@{.} "F39";"C39"};
{\ar@{.} "F36";"C36"};{\ar@{.} "F33";"C33"};{\ar@{.} "F30";"C30"};{\ar@{.} "F27";"C27"};
{\ar@{.} "F24";"C24"};{\ar@{.} "F21";"C21"};{\ar@{.} "F18";"C18"};{\ar@{.} "F15";"C15"};
{\ar@{.} "F12";"C12"};{\ar@{.} "F9";"C9"};{\ar@{.} "F6";"C6"};{\ar@{.} "F3";"C3"};
(-1.3,-1.8)*{\mbox{\bf b)}};
\endxy$}
\quad{
\begin{tabular}{@{}c@{~{\vrule width 0.5pt}~}c@{~~}c@{~~}c@{}}
 $v$ & $\phi(v)$ & $\ell_v$ & $r_v$\\ \hline
$a$&8 & 7 & 9\\[-0.3ex] $b$&2 & 1 & 3\\[-0.3ex] $c$&0 & 19& 1\\[-0.3ex]
$d$&13 & 12 & 14\\[-0.3ex] $e$&5 & 4 & 6\\[-0.3ex]
\color{blue}\bf\textit{h}&\color{blue}\bf 6 &\color{blue}\bf 2 &\color{blue}\bf 12 \\[-0.3ex]
$i$&15 & 8 & 19\\[-0.3ex] $j$&8 & 6 & 10\\[-0.3ex] $l$&2 & 0 &4\\[-0.3ex]
$m$&12 & 10 & 13\\[-0.3ex] $n$&16 & 15 & 17\\[-0.3ex] $o$&1 & 17 & 5\\[-0.3ex]
$p$&13 & 11 & 15
\end{tabular}}
\medskip

\raisebox{6ex}{
$\xy/r2.7pc/:(0,0)*++[o][F]{c,\!o}="co";(1.5,0)*++[o][F]{h,\!i,\!m}="him";
(3,0)*++[o][F]{i,\!m,\!p}="imp";(0,-1)*++[o][F]{h,\!l,\!o}="hlo";
(1.5,-1)*++[o][F]{h,\!i,\!o}="hio";(3,-1)*++[o][F]{h,\!i,\!j}="hij";
(0,-2)*++[o][F]{b,\!l,\!o}="blo";(1.5,-2)*++[o][F]{e,\!h}="eh";
(3,-2)*++[o][F]{a,\!h,\!j}="ahj";(1.5,1)*++[o][F]{i,\!n}="in";
(3,1)*++[o][F]{d,\!i,\!p}="dip";
{\ar@{-} "co";"hlo"};{\ar@{-} "hlo";"blo"};{\ar@{-} "hlo";"hio"};
{\ar@{-} "hio";"eh"};{\ar@{-} "hio";"hij"};{\ar@{-} "hio";"him"};
{\ar@{-} "hij";"ahj"};{\ar@{-} "him";"imp"};{\ar@{-} "him";"in"};
{\ar@{-} "imp";"dip"};
"co"+(-0.45,0.35)*{_{\rm 0}};"hlo"+(-0.5,0.4)*{_{\rm 1}};
"blo"+(-0.45,-0.35)*{_{\rm 2}};"hlo"+(0.45,-0.4)*{_{\rm 3}};
"hio"+(-0.45,-0.4)*{_{\rm 4}};"eh"+(-0.45,-0.3)*{_{\rm 5}};
"hio"+(0.45,-0.35)*{_{\rm 6}};"hij"+(-0.45,-0.35)*{_{\rm 7}};
"ahj"+(0.6,-0.3)*{_{\rm 8}};"hij"+(0.6,0.3)*{_{\rm 9}};
"hio"+(0.6,0.3)*{_{\rm 10}};"him"+(0.6,-0.3)*{_{\rm 11}};
"imp"+(0.7,0)*{_{\rm 12}};"dip"+(0.6,0.4)*{_{\rm 13}};
"imp"+(-0.45,0.4)*{_{\rm 14}};"him"+(0.45,0.4)*{_{\rm 15}};
"in"+(-0.5,0.35)*{_{\rm 16}};"him"+(-0.5,0.4)*{_{\rm 17}};
"hio"+(-0.5,0.4)*{_{\rm 18}};"hlo"+(0.4,0.4)*{_{\rm 19}};
\endxy$}
\qquad\quad
{$\xy/r3pc/:
{{\xypolygon60"A"{~:{(1.9,0):}~>{}~={80}{}}},{\xypolygon60"B"{~:{(2.1,0):}~>{}~={80}{}}},
{\xypolygon60"C"{~:{(2.3,0):}~>{}~={80}{}}},{\xypolygon60"D"{~:{(1.5,0):}~>{}~={80}{}}},
{\xypolygon60"E"{~:{(1.7,0):}~>{}~={80}{}}},{\xypolygon60"F"{~:{(1.2,0):}~>{}~={80}{}}},
{\xypolygon60"Y"{~:{(1.0,0):}~>{}~={80}{}}}}; {"A40";"A38"**\crv{"A39"&}};
{"B58";"B56"**\crv{"B57"&}}; {"A4";"A2"**\crv{"A3"&}}; {"E25";"E23"**\crv{"E24"&}};
{"A49";"A47"**\crv{"A48"&}}; {"B13";"B5"**[red][thicker]\crv{"B12"&"B9"&"B6"&}};
{"A37";"A8"**\crv{"A36"&"A33"&"A30"&"A27"&"A24"&"A21"&"A18"&"A15"&"A12"&"A9"}};
{"E43";"E35"**\crv{"E42"&"E39"&"E36"&}}; {"A1";"A53"**\crv{"A60"&"A57"&"A54"&}};
{"E31";"E26"**\crv{"E30"&"E27"&}}; {"B16";"B14"**\crv{"B15"&}};
{"E10";"E50"**\crv{"E9"&"E6"&"E3"&"E60"&"E57"&"E54"&"E51"&}};
{"B28";"B20"**\crv{"B27"&"B24"&"B21"&}};
"Y60"*{{_{\,1}}};"Y57"*{{_{\,2}}};"Y54"*{{_{\,3}}};"Y51"*{{_{\,4}}};"Y48"*{{_{\,5}}};"Y45"*{{_{6\,}}};
"Y42"*{{_{7\,}}};"Y39"*{{_{8\,}}};"Y36"*{{_{9}}};"Y33"*{{_{10}}};"Y30"*{{_{11}}};"Y27"*{{_{12}}};
"Y24"*{{_{13}}};"Y21"*{{_{14}}};"Y18"*{{_{15}}};"Y15"*{{_{16}}};"Y12"*{{_{17}}};"Y9"*{{_{18}}};
"Y6"*{{_{19}}};"Y3"*{{_{\,0}}};
"B3"*{c};"D1"*{o};"B59"*{l};"C57"*{b};"E48"*{e};"C10"*{\mbox{\color{red}\bf\textit{h}}};"A41"*{a};"D41"*{j};"D29"*{m};
"D24"*{d};"C25"*{p};"C15"*{n};"E17"*{i};
{\ar@{.} "F60";"C60"};{\ar@{.} "F57";"C57"};{\ar@{.} "F54";"C54"};{\ar@{.} "F51";"C51"};
{\ar@{.} "F48";"C48"};{\ar@{.} "F45";"C45"};{\ar@{.} "F42";"C42"};{\ar@{.} "F39";"C39"};
{\ar@{.} "F36";"C36"};{\ar@{.} "F33";"C33"};{\ar@{.} "F30";"C30"};{\ar@{.} "F27";"C27"};
{\ar@{.} "F24";"C24"};{\ar@{.} "F21";"C21"};{\ar@{.} "F18";"C18"};{\ar@{.} "F15";"C15"};
{\ar@{.} "F12";"C12"};{\ar@{.} "F9";"C9"};{\ar@{.} "F6";"C6"};{\ar@{.} "F3";"C3"};
(-1.3,-1.8)*{\mbox{\bf c)}};
\endxy$}
\quad{
\begin{tabular}{@{}c@{~{\vrule width 0.5pt}~}c@{~~}c@{~~}c@{}}
 $v$ & $\phi(v)$ & $\ell_v$ & $r_v$\\ \hline
$a$&8 & 7 & 9\\[-0.3ex] $b$&2 & 1 & 3\\[-0.3ex] $c$&0 & 19& 1\\[-0.3ex]
$d$&13 & 12 & 14\\[-0.3ex] $e$&5 & 4 & 6\\[-0.3ex] {\color{red}
{\bf\textit{h}}}&{\color{red}\bf 18} & {\color{red}\bf 16} & {\color{red}\bf 0}
\\[-0.3ex]
$i$&15 & 8 & 19\\[-0.3ex] $j$&8 & 6 & 10\\[-0.3ex] $l$&2 & 0 &4\\[-0.3ex]
$m$&12 & 10 & 13\\[-0.3ex] $n$&16 & 15 & 17\\[-0.3ex] $o$&1 & 17 & 5\\[-0.3ex]
$p$&13 & 11 & 15
\end{tabular}}
\caption{{\em a)} Example graph, {\em b)} correct choice of $\phi$,
{\em c)} incorrect choice\label{fig:ex2}}
\end{figure}

\subsection{Proof}

Finally, we are ready to give a proof of Theorem~\ref{thm:main}.
Let $G$ be a chordal graph whose independence number $\alpha(G)$ is at most
four. The direction (i)$\Rightarrow$(ii) is proved as Lemma~\ref{lem:bqs}. 

For the converse, assume (ii), that is, $G$ contains no blocking quadruple.  In
what follows, we describe how to conclude that (i) holds, namely that $G$ is a
circular-arc graph. This splits into several cases.

First, if $\alpha(G)\leq 2$, then $G$ does not have an asteroidal triple. As $G$
is also chordal, $G$ is an interval graph~\cite{lekker62}, and hence, a
circular-arc graph.

Next, suppose that $\alpha(G)=3$. As $G$ is a perfect graph, there is a clique
cover ${\cal Q}=\{Q_1,Q_2,Q_3\}$ of $G$. Thus, by Lemma~\ref{obs:leaves}, any
clique tree $T$ of $G$ can have at most 3 leaves. If $T$ has only 2 leaves, then
$T$ is a path and therefore $G$ is an interval graph, and hence, a circular-arc
graph.
So, let us assume that $T$ has 3 leaves, which, in view of
Lemma~\ref{obs:leaves}, are $Q_1$, $Q_2$ and $Q_3$. Let ${\cal
A}=A_0,A_1,\ldots,A_{k-1},A_0$ be any Euler tour of $T$. It is easy to see that
each of $Q_1,Q_2,Q_3$, being a leaf of $T$, occurs exactly once in ${\cal A}$.
For every vertex $u\in V(G)$, define $Q(u)$ to be any clique among $Q_1,Q_2,Q_3$
that contains $u$ (break ties arbitrarily).  Then, define $\phi(u)$ to be the
integer in $\{0,1,\ldots,k-1\}$ such that $A_{\phi(u)}$ is the unique occurrence
of $Q(u)$ in ${\cal A}$.

Let us verify that the mapping $\phi$ satisfies the conditions of
Lemma~\ref{lem:main}.  Clearly, for each $u\in V(G)$, we have $u\in Q(u)$ by
definition, and hence, $u\in A_{\phi(u)}$ since $A_{\phi(u)}=Q(u)$.  Thus $\phi$
satisfies \ref{enum:f0}.  Now, consider an edge $uv\in E(G)$.  If
$\phi(u)=\phi(v)$, then ${\cal A}(\phi(u),\phi(v))={\cal A}(\phi(v),\phi(u))$ is
empty, and so each of \ref{enum:f1}-\ref{enum:f4} is vacuously satisfied.  So,
assume $\phi(u)\neq \phi(v)$ in which case either ${\cal A}(\phi(u),\phi(v))$ or
${\cal A}(\phi(v),\phi(u))$ contains no clique of ${\cal Q}$. For this, note
that $|{\cal Q}|=3$ and each clique in ${\cal Q}$ appears exactly once in $\cal
A$, while $A_{\phi(u)},A_{\phi(v)}\in {\cal Q}$ and $\phi(u)\neq \phi(v)$.
Thus, since each $x\in V(G)$ satisfies ${\cal A}_{\phi(x)}\in{\cal Q}$ by the
construction of $\phi$, we conclude that either \ref{enum:f1}-\ref{enum:f2} or
\ref{enum:f3}-\ref{enum:f4} are vacuously satisfied.  This verifies that $\phi$
satisfies the conditions of Lemma~\ref{lem:main}. Thus, by Lemma~\ref{lem:main},
we conclude that $G$ is a circular-arc graph.\medskip

We shall now prove the theorem for the case $\alpha(G)=4$.  As before, since $G$
is a perfect graph, this means that $G$ can be covered with 4 cliques. Let $T$
be a clique tree of $G$ and let $\mathcal{Q}= \{Q_1,Q_2,Q_3,Q_4\}$ be a clique
cover of $G$.  By Lemma~\ref{obs:leaves}, the tree $T$ has at most 4 leaves.  If
$T$ has only 2 leaves, then $T$ is a path and therefore, $G$ is an interval
graph (and hence a circular-arc graph).

\begin{figure}[t]
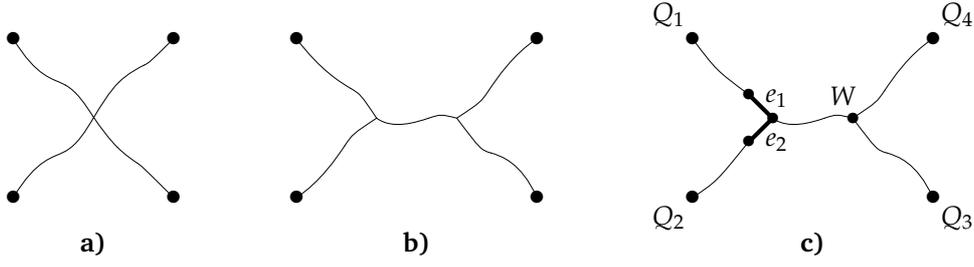

\centering
\begin{tabular}{c@{\qquad\qquad}c@{\qquad\qquad}c}
{$\xy/r2.5pc/: (0,0)*{}="x1";"x1"*{\mbox{\large\textbullet}}; (0,-2)*{}="x2";"x2"*{\mbox{\large\textbullet}};
(2,-2)*{}="x3";"x3"*{\mbox{\large\textbullet}}; (2,0)*{}="x4";"x4"*{\mbox{\large\textbullet}};
{"x1";"x3"**\crv{(0.3,-0.5)&(0.7,-0.5)&(1.0,-1.0)&(1.3,-1.4)&(1.6,-1.6)&}};
{"x2";"x4"**\crv{(0.3,-1.5)&(0.7,-1.5)&(1.0,-1.0)&(1.3,-0.5)&(1.7,-0.3)&}}; \endxy$}
&
{$\xy/r2.5pc/: (0,0)*{}="x1";"x1"*{\mbox{\large\textbullet}};
(0,-2)*{}="x2";"x2"*{\mbox{\large\textbullet}}; (3,-2)*{}="x3";"x3"*{\mbox{\large\textbullet}};
(3,0)*{}="x4";"x4"*{\mbox{\large\textbullet}}; {"x1";(1,-1)**\crv{(0.3,-0.4)&(0.8,-0.7)&}};
{"x2";(1,-1)**\crv{(0.3,-1.8)&(0.7,-1.2)&}}; {"x3";(2,-1)**\crv{(2.8,-1.55)&(2.3,-1.4)&}};
{"x4";(2,-1)**\crv{(2.7,-0.2)&(2.3,-0.8)&}}; {(1,-1);(2,-1)**\crv{(1.2,-1.15)&(1.8,-0.95)&}}; \endxy$}
&
{$\xy/r2.5pc/: (0,0)*{}="x1";"x1"*{\mbox{\large\textbullet}}; (0,-2)*{}="x2";"x2"*{\mbox{\large\textbullet}};
(3,-2)*{}="x3";"x3"*{\mbox{\large\textbullet}}; (3,0)*{}="x4";"x4"*{\mbox{\large\textbullet}};
(1,-1)*{}="Z";"Z"*{\bullet}; (2,-1)*{}="W";"W"*{\bullet}; (0.7,-0.7)*{}="e1";"e1"*{\bullet};
(0.7,-1.3)*{}="e2";"e2"*{\bullet}; {"x1";"e1"**\crv{(0.3,-0.4)&}}; {"x2";"e2"**\crv{(0.3,-1.8)&}};
{"x3";"W"**\crv{(2.8,-1.55)&(2.3,-1.4)&}}; {"x4";"W"**\crv{(2.7,-0.2)&(2.3,-0.8)&}};
{"Z";"W"**\crv{(1.2,-1.15)&(1.8,-0.95)&}}; "x1"+(-0.3,0.3)*{Q_1}; "x2"+(-0.3,-0.25)*{Q_2};
"x3"+(0.3,-0.25)*{Q_3}; "x4"+(0.3,0.3)*{Q_4}; "Z";"e1"**[thicker][thicker]\dir{-};
"Z";"e2"**[thicker][thicker]\dir{-}; "e1"+(0.35,-0.05)*{{e_1}}; "e2"+(0.35,0)*{{e_2}};
"W"+(-0.12,0.28)*{W}; \endxy$}
\\
{\bf a)} & {\bf b)} & {\bf c)}
\end{tabular}
\caption{When $T$ has 4 leaves\label{fig:tree4}}
\end{figure} 

Suppose that $T$ has 4 leaves. Then $Q_1,Q_2,Q_3,Q_4$ are the four leaves of $T$
and we have the two cases shown in Figure~\ref{fig:tree4} a), b).  For
$i\in\{1,2,3,4\}$, since $Q_i$ is a leaf of the clique tree $T$, we know from
Lemma~\ref{obs:leaf} that there exists a vertex $v_i\in Q_i$ such that $Q_i$ is
the only clique in $V(T)$ that contains $v_i$.

Now, consider the graph $H$ on $\{v_1,v_2,v_3,v_4\}$
with edge set\smallskip

\noindent\mbox{}\hfill$E(H)=\Big\{v_iv_j~\Big|~v_i,v_j$ avoid the vertices
$V(H)\setminus\{v_i,v_j\}\Big\}$\hfill\mbox{}\smallskip

\noindent As $v_1,v_2,v_3,v_4$ is not a blocking quadruple, $H$ is
either a 4-cycle, a $2K_2$ or has no edges.

For an Euler tour ${\cal A}$ of $T$, we shall write ${\cal A}|_{\cal Q}$ to
denote the cyclic subsequence of ${\cal A}$ induced by the cliques in ${\cal
Q}$. Namely, ${\cal A}|_{\cal Q}$ is the cyclic permutation of $Q_1,Q_2,Q_3,Q_4$
that we obtain from ${\cal A}$ by removing all cliques but the occurrences of
$Q_1,Q_2,Q_3,Q_4$.

Note that each $Q_i$, for $i\in\{1,2,3,4\}$, occurs exactly once in ${\cal A}$
as it is a leaf of $T$.  We say that an Euler tour ${\cal A}$ {\em respects} $H$
if $Q_i$ and $Q_j$ are cyclically consecutive in ${\cal A}|_{\cal Q}$ whenever
$v_iv_j\in E(H)$.  It turns out that the existence of such a tour implies
that $G$ is a circular-arc graph. This is proved as follows.

\begin{claimaux}{$*$}\label{clm:star}
If some Euler tour of $T$ respects $H$, then $G$ is a circular-arc graph.
\end{claimaux}

\noindent To see this, let ${\cal A}=A_0,A_1,\ldots,A_{k-1},A_0$ be such a tour.  Define $Q(u)$,
for every $u\in V(G)$, to be any clique among $Q_1,Q_2,Q_3,Q_4$ that contains
$u$ (break ties arbitrarily).  Then, define $\phi(u)$ to be the integer in
$\{0,1,\ldots,k-1\}$ such that $A_{\phi(u)}$ is the unique occurrence of $Q(u)$
in ${\cal A}$.

We claim that $\phi$ satisfies the conditions of Lemma~\ref{lem:main}.  Clearly,
for each $u\in V(G)$, we have $u\in{\cal A}_{\phi(u)}$, since $u\in Q(u)$ and
${\cal A}_{\phi(u)}=Q(u)$ by definition. Thus $\phi$ satisfies \ref{enum:f0}.
Now, for contradiction, suppose that there exists an edge $uv\in E(G)$ that does
not satisfy any of the conditions \ref{enum:f1}-\ref{enum:f4}.

Clearly, $Q(u) \neq Q(v)$ and $Q(u)$, $Q(v)$ do not appear consecutively in
${\cal A} |_{\cal Q}$, since otherwise either ${\cal A}(\phi(u),\phi(v))$ or
${\cal A}(\phi(v),\phi(u))$ contains no clique of $\cal Q$ and thus one of
\ref{enum:f1}-\ref{enum:f4} vacuously satisfied. So let us assume without loss
of generality that $Q(u)=Q_1$ and $Q(v)=Q_2$. Then, $\mathcal{A}|_{\cal Q}$ is
the cyclic ordering $Q_1,Q_3, Q_2,Q_4$ or its reverse. As none of the conditions
\ref{enum:f1}-\ref{enum:f4} of Lemma~\ref{lem:main} are satisfied for the edge
$uv$, it follows that there exists a vertex $x_3$ with $Q(x_3)=Q_3$ and a vertex
$x_4$ with $Q(x_4)=Q_4$ such that $ux_3\not\in E(G)$ and $ux_4\not\in E(G)$.
From this, we conclude that $u\not\in Q_3$ and $u\not\in Q_4$.  Similarly, there
exists $y_3$ with $Q(y_3)=Q_3$ and $y_4$ with $Q(y_4)=Q_4$ such that
$vy_3,vy_4\not\in E(G)$. Thus $v\not\in Q_3$ and $v\not\in Q_4$. From this,
since $Q_3$, resp. $Q_4$ is the only clique in $V(T)$ that contains $v_3$, resp.
$v_4$, we conclude that $uv_3,vv_3,uv_4,vv_4\not\in E(G)$. We now
have a path $v_1,u,v,v_2$ that is missed by $v_3$ and by $v_4$ (note that it is
possible that $u=v_1$ or $v=v_2$).  Therefore, $v_1,v_2$ avoid $v_3,v_4$, which
implies that $v_1v_2\in E(H)$. But $Q_1$ and $Q_2$ are not cyclically
consecutive in $\mathcal{A}|_{\cal Q}$, and so ${\cal A}$ does not respect $H$,
a contradiction.

This shows that $\phi$ satisfies the conditions of Lemma~\ref{lem:main}. Thus by
Lemma~\ref{lem:main}, we conclude that $G$ is a circular-arc graph.
This proves \ref{clm:star}.
\medskip

In light of \ref{clm:star}, it now suffices to show that an Euler tour of $T$
that respects $H$ exists. We analyze the two cases a) and b) of Figure
\ref{fig:tree4} separately.

Let us first consider case a). Since $T$ consists of
four paths joined at a single vertex, we have a freedom when traversing $T$ to
choose to follow the four paths in any order. In other words, for every possible
cyclic permutation of $Q_1,Q_2,Q_3,Q_4$, there is an Euler tour $\cal A$ of $T$
such that ${\cal A}|_{\cal Q}$ is precisely the chosen permutation.  From this,
it follows that there exists an Euler tour ${\cal A}$ of $T$ that respects $H$.
Namely, we choose a cyclic permutation of $Q_1,Q_2,Q_3,Q_4$ in which for
non-consecutive cliques $Q_i,Q_j$ the pair $v_iv_j$ is not an edge of $H$.  This
is always possible as $H$ is isomorphic to a $C_4$, $2K_2$ or $4K_1$. Therefore,
by \ref{clm:star}, we conclude that $G$ is a circular-arc graph.

If $G$ has a clique tree of the form a) from Figure~\ref{fig:tree4}, then we
are done. Hence, we shall assume that $G$ has no clique tree of the form a).
Suppose that $G$ has a clique tree of the form b) from Figure~\ref{fig:tree4}.
We claim that there is an Euler tour ${\cal A}$ of $T$ that respects $H$.  Let
the leaves of the tree $T$ be labelled as shown in c) in Figure~\ref{fig:tree4} and
let $W$ be the vertex of $T$ as shown in the figure.

It follows that there is no Euler tour of $T$ that respects $H$ if and only if
$H$ is the cycle $v_3,v_1,v_4,v_2$. Suppose that this is the case.  Note that
$v_1v_2\not\in E(H)$.  Let $e_1$ and $e_2$ be the edges shown in c) in
Figure~\ref{fig:tree4}, let $C$ be their common endpoint, and $C_1$ resp.  $C_2$
be their other endpoints, i.e.,  $e_1=CC_1$ and $e_2=CC_2$.  We claim that
either $C\cap C_1\subseteq W$ or $C\cap C_2\subseteq W$. Indeed, since $T$ is a
clique tree, we know that there exists a vertex $x_1\in C_1\setminus C$ and a
vertex $x_2 \in C_2\setminus C$. It follows that  $x_1\in Q_1$ and $x_2\in Q_2$,
since $T$ is a clique tree, $x_1,x_2\not\in C$, and since ${\cal Q}$ is a clique
cover of $G$. Now, if there exist vertices $u_1\in (C\cap C_1)\setminus W$ and
$u_2\in (C\cap C_2)\setminus W$, then $v_1,x_1,u_1,u_2,x_2,v_2$ is a path in $G$
from $v_1$ to $v_2$ that is missed by both $v_3$ and $v_4$. But then $v_1,v_2$
avoid $v_3,v_4$ implying $v_1v_2\in E(G)$, a contradiction. Thus, from $C\cap
C_1\subseteq W$ or $C\cap C_2\subseteq W$, we conclude that either the tree
obtained from $T$ by removing $e_1$ and adding the edge $C_1W$ or the tree
obtained from $T$ by removing $e_2$ and adding the edge $C_2W$ is another clique
tree of $G$.  However, both these trees are of the form~a), and we assume that
$G$ has no such clique tree, a contradiction.

We can therefore conclude that there is an Euler tour of $T$ that respects $H$.
Thus, by \ref{clm:star}, we again conclude that $G$ is a circular-arc
graph.\medskip

If $G$ has some clique tree that has four leaves, we are done.  Thus, we
shall assume that every clique tree of $G$ has exactly 3 leaves. Let $T$ be any
clique tree of $G$.  In this case, $T$ is of one of the two forms in
Figure~\ref{fig:tree3}.

\begin{figure}[t]
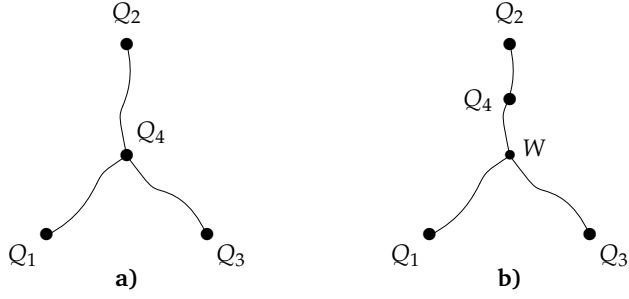

\centering\small
\begin{tabular}{c@{\qquad\qquad\qquad}c}
{$\xy/r2.5pc/: (1,0.4)*{}="x1";"x1"*{\mbox{\large\textbullet}};
(0,-2)*{}="x2";"x2"*{\mbox{\large\textbullet}}; (2,-2)*{}="x3";"x3"*{\mbox{\large\textbullet}};
(1,-1)*{}="x4";"x4"*{\mbox{\large\textbullet}}; {"x1";"x4"**\crv{(1.1,0)&(0.9,-0.5)&}};
{"x2";"x4"**\crv{(0.4,-1.8)&(0.7,-1.2)&}}; {"x3";"x4"**\crv{(1.8,-1.55)&(1.3,-1.4)&}};
"x1"+(0,0.4)*{Q_2}; "x2"+(-0.3,-0.25)*{Q_1}; "x3"+(0.3,-0.25)*{Q_3}; "x4"+(0.3,0.3)*{Q_4}; \endxy$}
&
{$\xy/r2.5pc/: (1,0.4)*{}="x1";"x1"*{\mbox{\large\textbullet}};
(0,-2)*{}="x2";"x2"*{\mbox{\large\textbullet}}; (2,-2)*{}="x3";"x3"*{\mbox{\large\textbullet}};
(1,-0.3)*{}="x4";"x4"*{\mbox{\large\textbullet}}; (1,-1)*{}="W";"W"*{\bullet};
{"x1";"W"**\crv{(1.1,0)&(0.9,-0.5)&}}; {"x2";"W"**\crv{(0.4,-1.8)&(0.7,-1.2)&}};
{"x3";"W"**\crv{(1.8,-1.55)&(1.3,-1.4)&}}; "x1"+(0,0.4)*{Q_2}; "x2"+(-0.3,-0.25)*{Q_1};
"x3"+(0.3,-0.25)*{Q_3}; "x4"+(-0.4,0)*{Q_4}; "W"+(0.3,0.1)*{W}; \endxy$}
\\
{\bf a)} & {\bf b)}
\end{tabular}
\caption{When $T$ has only three leaves\label{fig:tree3}}
\end{figure}

In both these cases, one of the cliques in ${\cal Q}=\{Q_1,Q_2,Q_3,Q_4\}$ is an
internal node of $T$. Let us assume without loss of generality that $Q_4$ is the
internal node among these in both cases. For $i\in\{1,2,3\}$, we let $v_i$
be a vertex that belongs to the clique $Q_i$ and to no other clique in $V(T)$.
We let $v_4$ be a vertex in $Q_4$ that does not belong to any other clique of
${\cal Q}$. Its existence is guaranteed by the fact that $G$ has no clique cover
of size 3 (as $\alpha(G)=4$).  For every vertex $u\in Q_1\cup Q_2\cup Q_3$, we
define $Q(u)$ so that $u\in Q(u)$. For a vertex $u\in V(G)$, we set $Q(u)=Q_4$ if
and only if $u\in Q_4\setminus (Q_1\cup Q_2\cup Q_3)$.

Let ${\cal A}=A_0,A_1,\ldots,A_{k-1},A_0$ be an Euler Tour of $T$.  We
again seek to construct $\phi$. For vertices $u$ with $Q(u)\in\{Q_1,Q_2,Q_3\}$,
the clique $Q(u)$ is a leaf of $T$ and thus appears exactly once in $\cal A$.
We choose this occurrence for $A_{\phi(u)}$ in order to satisfy \ref{enum:f0}.
For vertices $u$ with $Q(u)=Q_4$, we have more freedom as $Q_4$ appears multiple
times in any Euler tour of $T$. Namely, it appears exactly three times if $T$ is
of the form~a) of Figure~\ref{fig:tree3}, and it appears exacty two times if $T$
is of the form b). We choose one of these occurrences and assign $\phi(u)$ so
that $A_{\phi(u)}$ is this occurrence for every $u$ with $Q(u)=Q_4$. The choice
of this occurrence, however, will not be arbitrary. We will again use the graph
$H$ as constructed in the case with four~leaves~in~$T$.

Namely, let $H$ be the graph on vertices $\{v_1,v_2,v_3,v_4\}$ where $v_iv_j$ is
an edge if and only if $v_i,v_j$ avoid $V(H)\setminus\{v_i,v_j\}$.  Since
$v_1,v_2,v_3,v_4$ is not a blocking quadruple, we conclude that $H$ is either a
4-cycle, a $2K_2$, or edgeless.  Also, as in the case of four leaves, write
${\cal A}|_{\cal Q}$ to denote the cyclical sequence of the elements of $\cal
Q$ that is obtained from ${\cal A}$ by removing all occurrences of cliques that
are not in ${\cal Q}$.

First, suppose that $T$ is of the form a). Since $H$ contains no triangle, one
of the pairs $v_1v_2$, $v_2v_3$, $v_1v_3$ is not an edge of $H$. By symmetry,
without loss of generality, we may assume that $v_1v_2\not\in E(H)$.  Also,
without loss of generality, we may assume that ${\cal A}$ is an Euler tour of
$T$ such that\smallskip

\hfill${\cal A}|_{\cal Q}=Q_1,Q_4,Q_2,Q_4,Q_3,Q_4$\hfill\null\smallskip

\noindent (If not, we simply reverse and/or cyclically shift ${\cal A}$ to achieve this.)

For every vertex $u\in V(G)$ such that $Q(u)=Q_4$, define $\phi(u)$ to be the
integer such that $A_{\phi(u)}$ is the occurrence of $Q_4$ in ${\cal A}$ after
$Q_1$ and before $Q_2$. For every vertex $u\in V(G)$ such that $Q(u)=Q_i$, where
$i\in\{1,2,3\}$, define $\phi(u)$ to be the integer such that $A_{\phi(u)}$ is
the unique occurrence of $Q_i$ in ${\cal A}$. 

Let us now verify that $\phi$ satisfies the conditions of Lemma~\ref{lem:main}.
By definition, $u\in A_{\phi(u)}$ for all $u\in V(G)$. Thus $\phi$
satisfies \ref{enum:f0}. Now, consider an edge $uv\in E(G)$.
If $Q(u)=Q(v)$ or if $Q(u)\in\{Q_1,Q_2\}$ and $Q(v)\in\{Q_3,Q_4\}$, then either
${\cal A}(\phi(u),\phi(v))$ or ${\cal A}(\phi(v),\phi(u))$ contains only an
occurrence $A_i$ of $Q_4$ (if any) among the cliques of $\cal Q$, but no $x\in
V(G)$ satisfies $\phi(x)=i$ by the definition of $\phi$. Thus one of
\ref{enum:f1}-\ref{enum:f4} is vacuously satisfied in this case.  The same holds by symmetry
if $Q(u)\in\{Q_3,Q_4\}$ and $Q(v)\in\{Q_1,Q_2\}$.  Thus we may assume that
$Q(u)\neq Q(v)$ and either $Q(u),Q(v)\in\{Q_1,Q_2\}$ or
$Q(u),Q(v)\in\{Q_3,Q_4\}$.

Suppose first that $Q(u),Q(v)\in\{Q_3,Q_4\}$. By symmetry, we may assume that
$Q(u)=Q_3$ and $Q(v)=Q_4$. If $uv$ fails all of \ref{enum:f1}-\ref{enum:f4},
then there exist vertices $x_1,x_2,y_1,y_2$ where $Q(x_1)=Q(y_1)=Q_1$ and
$Q(x_2)=Q(y_2)=Q_2$ such that $ux_1,ux_2,vy_1,vy_2\not\in E(G)$.  From this, we
deduce $u,v\not\in Q_1$ and $u,v\not\in Q_2$. Therefore
$uv_1,uv_2,vv_1,vv_2\not\in E(G)$, since $Q_1$, resp.  $Q_2$ is the only clique
that contains $v_1$, resp. $v_2$.  Thus $v_3,u,v,v_4$ is a path (possibly
$u=v_3$ or $v=v_4$) missed by $v_1$ and $v_2$.  This yields $v_1v_2\in
E(H)$, a contradiction.

Next, suppose that $Q(u),Q(v)\in\{Q_1,Q_2\}$. By symmetry, assume that
$Q(u)=Q_1$ and $Q(v)=Q_2$. If $uv$ fails \ref{enum:f1}-\ref{enum:f4}, then there
are vertices $x_4,y_4$ with $Q(x_4)=Q(y_4)=Q_4$ such that $ux_4,vy_4\not\in
E(G)$. This implies $u,v\not\in Q_4$.  Let $P$ be the path of
$T$ between $Q_1$ and $Q_2$. Since $T$ is as depicted in Figure~\ref{fig:tree3},
the clique $Q_4$ lies on $P$.  Also, since $T$ is a clique tree and $uv\in E(G)$
where $u\in Q_1$ while $v\in Q_2$, there exists a clique $Q$ on $P$ such that
$u,v\in Q$. Thus either $u\in Q_4$ if $Q_4$ belongs to the subpath of $P$
between $Q_1$ and $Q$, or $v\in Q_4$ if otherwise. However, earlier we deduced
that $u,v\not\in Q_4$, a contradiction.

This shows that $\phi$ satisfies the conditions of Lemma~\ref{lem:main}.
Thus by Lemma~\ref{lem:main}, $G$ is a circular-arc graph.  \smallskip

Finally, suppose that $T$ is of the form b) of Figure
\ref{fig:tree3}.  Let the cliques $Q_1,Q_2,Q_3,Q_4$ be as labelled in the
figure. Let $P$ be the path between $Q_4$ and $Q_2$ in $T$.  Let $Q'_2$ be the
neighbour of $Q_4$ on $P$ (possibly $Q'_2=Q_2$).

We prove that the vertices $v_1,v_3$ avoid the vertices $v_2,v_4$ in $G$.
Firstly, suppose there exists a vertex $u\in Q'_2$ with $Q(u)=Q_4$.  Since $Q_4$
and $Q'_2$ are distinct maximal cliques of $G$, there exists $u'\in
Q'_2\setminus Q_4$. We claim that $Q(u')=Q_2$ and $u'\not\in Q_1\cup Q_3$.
Indeed, if $u'$ belongs to $Q_1$ or $Q_3$, then it also belongs to $Q_4$, since
$T$ is a clique tree, and the path between $Q_1,Q_3$ and $Q'_2$ goes through
$Q_4$. But $u'\notin Q_4$.  So we must conclude $u'\not\in Q_1\cup Q_3$, and
hence,  $Q(u')=Q_2$ because $Q_2$ is the only remaining clique from ${\cal Q}$
that can contain $u'$.  We also have $u\not\in Q_1\cup Q_3$, since
$Q(u)=Q_4$ if and only if $u\in Q_4\setminus(Q_1\cup Q_2\cup Q_3)$, by
definition. From this, we deduce that $uv_1,u'v_1,uv_3,u'v_3\not\in E(G)$,
since $Q_1$, resp.  $Q_3$ is the only clique in $V(T)$ that contains $v_1$,
resp. $v_3$.  This implies that $v_4,u,u',v_2$ forms a path in $G$ that is
missed by both $v_1$ and $v_3$  (note that possibly $u'=v_2$). Thus we conclude
that $v_1,v_3$ avoid $v_2,v_4$ as promised.

Secondly, suppose there exists a vertex $u\in Q_4\cap Q_2$ that does not belong to
$Q_1\cup Q_3$. Then $v_4,u,v_2$ forms a path in $G$ that is missed by both $v_1$
and $v_3$. Thus, we again conclude that $v_1,v_3$ avoid $v_2,v_3$.

We claim that one of these cases must be fulfilled.  Indeed, suppose that every
$u\in Q'_2$ satisfies $Q(u)\in\{Q_1,Q_2,Q_3\}$, and every $u\in Q_4\cap Q_2$
belongs to $Q_1\cup Q_3$.  Thus every $u\in Q_4\cap Q'_2$ is in $Q_1\cup Q_3$.
Namely, if $Q(u)\in\{Q_1,Q_3\}$, then $u\in Q_1\cup Q_3$ by definition, and if
$Q(u)=Q_2$, then $u\in Q_2$, and hence, $u\in Q_4\cap Q_2$ implying $u\in
Q_1\cup Q_3$ by our assumption. Since every such $u$ is in $Q_1\cup Q_3$, it
also belongs to $W$, since $T$ is a clique tree and $W$ lies on the path of $T$
between $Q_4$ and either of $Q_1$, $Q_3$.  This means that the tree obtained by
removing the edge $Q_4Q'_2$ from $T$ and adding the edge $WQ'_2$ is again a
clique tree of $G$. But this is a clique tree of $G$ with four leaves, and we
assume that $G$ has no such clique tree, a contradiction.

Now, since $v_1,v_3$ avoid $v_2,v_4$, we have that $v_1v_3$ and $v_2v_4$ are
edges of $H$.  Since $H$ contains no triangle, this means that one of $v_1v_2$, $v_2v_3$ is not
an edge of $H$. By symmetry, without loss of generality, assume that
$v_1v_2\not\in E(H)$. Also, without loss of generality, assume that ${\cal A}$
is an Euler tour of $T$ such that\smallskip

\hfill${\cal A}|_{\cal Q}=Q_1,Q_4,Q_2,Q_4,Q_3$\hfill\null\medskip

For every vertex $u\in V(G)$ such that $Q(u)=Q_4$, define $\phi(u)$ to be the
integer such that $A_{\phi(u)}$ is the occurrence of $Q_4$ in ${\cal A}$ after
$Q_1$ and before $Q_2$. For every vertex $u\in V(G)$ such that $Q(u)=Q_i$, where
$i\in\{1,2,3\}$, define $\phi(u)$ to be the integer such that $A_{\phi(u)}$ is
the unique occurrence of $Q_i$ in ${\cal A}$. 

Notice that this definition of $\phi$ is identical to that of the previous case
when $T$ was assumed to be type~a). We also assume that $v_1v_2\not\in E(H)$. 
This allows us to repeat the argument (word-for-word) from the previous case
that shows that $\phi$ satisfies the conditions of Lemma~\ref{lem:main}.  From
this, we conclude by Lemma~\ref{lem:main} that $G$ is a circular-arc graph.
This exhausts all cases and thus concludes the proof of Theorem~\ref{thm:main}.

\bibliographystyle{acm}
\bibliography{cca_arxiv}
\end{document}